\numberwithin{equation}{section}
\theoremstyle{plain}
\newtheorem{theorem}{Theorem}[section]
\newtheorem{lemma}[theorem]{Lemma}
\newtheorem{proposition}[theorem]{Proposition}
\newtheorem*{thm*}{Theorem}
\theoremstyle{definition}
\newtheorem{remark}[theorem]{Remark}
\newtheorem{definition}[theorem]{Definition}
\newcommand\I{\iota}
\newcommand\C{\mathbb{C}}
\newcommand\T{\mathbb{T}}
\newcommand\R{\mathbb{R}}
\newcommand\Z{\mathbb{Z}}
\DeclareMathOperator\diag{diag}
\newcommand{\rev}[1]{{\color{black}{#1}}}
\title[Signal recovery from blind phaseless periodic short-time Fourier transform]{Near-optimal bounds for signal recovery from blind phaseless periodic short-time Fourier transform}
\author{Tamir Bendory}
\address{School of Electrical Engineering, Tel Aviv University}
\author{Chi-yu Cheng}
\address{Department of Mathematics, University of Missouri-Columbia}
\author{ Dan Edidin}
\email{bendory@tauex.tau.ac.il,  ccp9f@missouri.edu, edidind@missouri.edu}
\date{}
\begin{document}

\maketitle
\begin{abstract}
	We study the problem of recovering a signal $x\in\C^N$ from samples of its phaseless periodic short-time Fourier transform (STFT):  the magnitude of the  Fourier transform of the signal multiplied by a sliding window $w\in \C^W$. 
	We show that if the window $w$ is known, then a generic signal can be recovered, up to a global phase, from less than $4N$ phaseless STFT measurements.
	In the blind case, when the window is unknown, we show that the signal and the window can be determined simultaneously, up to a group of unavoidable ambiguities, from less than $4N+2W$ measurements. 
	In both cases, our bounds are optimal, up to a constant smaller than two.
\end{abstract}
%\rev{This is a text}
\section{Introduction}
The short-time Fourier transform (STFT) of a signal $x\in\C^N$ can be interpreted as the Fourier transform of the signal multiplied by a sliding window $w\in \C^W$ 
\begin{equation} \label{eq.stft}
	Y_{m,r}(x,w) = \sum_{n = 0}^{N-1} x[n] w[rL -n]e^{-2\pi\I nm/N},
\end{equation}
for  $0\leq m\leq N-1$ and $0\leq r\leq R-1$, 
where $L$ is the separation between sections, 
$R = N/\gcd(N,L)$ is the number of short time sections, and $w[n] =0$ for $W \leq n \leq N-1$.
We assume that all signals are periodic, and thus  all indices should be considered modulo~$N$.

This paper studies the fundamental conditions allowing  unique signal recovery---up to unavoidable ambiguities that will be  precisely defined later---from the magnitude of its STFT  $|Y_{m,r}(x,w)|$, namely, from its phaseless STFT measurements. 
In particular, we study two cases: 1) the window function $w$ is known, and  2) the blind case when $w$ is unknown and needs to be recovered simultaneously with the signal $x$.
We prove near-optimal bounds for both cases. For the known-window case, we show that no more than $4 N$ measurements suffice to recover the $2N$ parameters of $x\in\C^N$, substantially improving upon previous results~\cite{pfander2019robust,bojarovska2016phase}.
In the blind case, we prove that merely $\sim 4 N + 2W$ measurements determine the $2N+2W$ parameters that define the signal and window.  As far as we know, this is the first uniqueness result for the blind setup.  

Section~\ref{sec:main_results} introduces and discusses the main results of this paper, which are proved in Section~\ref{sec:proofs}.
It should be emphasized that our results concern only the question of uniqueness, and do not imply that practical algorithms can robustly recover the signal with only $O(N)$ measurements; the computational and stability properties of different algorithms were studied in~\cite{eldar2014sparse,iwen2016fast,bojarovska2016phase,bendory2017non,pfander2019robust,iwen2020phase,preskitt2021admissible,alaifari2021stability}. 
Nevertheless, in Section~\ref{sec:numerical_experiments}, we show numerical results suggesting that $O(N)$ might suffice for signal recovery, when the window is known. 

\paragraph{\textbf{Motivation.}} The motivation of this paper is twofold. First, phaseless STFT measurements naturally arise in ptychography:  a computational method of microscopic imaging, in which the specimen is scanned
by a localized  beam and Fourier magnitudes of overlapping  windows are
recorded~\cite{rodenburg2008ptychography,dierolf2010ptychographic,maiden2011superresolution,zheng2013wide,yeh2015experimental,pfeiffer2018x}.
The precise structure of the window might be unknown a priori and thus standard algorithms in the field optimize over the signal and the window
simultaneously~\cite{thibault2009probe,maiden2009improved,guizar2008phase}. 
This paper illustrates the fundamental conditions required for unique recovery in ptychography, regardless of the specific algorithm used. 
Second, this paper is part of  ongoing efforts to unveil the mathematical and algebraic properties standing at the heart of the phase retrieval problem---the problem of recovering a signal from phaseless measurements~\cite{shechtman2015phase,bendory2017fourier,grohs2020phase,barnett2020geometry,bendory2022algebraic}.  
Next, we succinctly present some of the main results in the field.  

\paragraph{\textbf{The phase retrieval problem.}} Phase retrieval is the problem of recovering a signal $x\in\C^N$ from   
\begin{equation} \label{eq:pr}
	y = |Ax|,
\end{equation}
for some sensing matrix $A\in\C^{M\times N}$, where the absolute value should be understood entry-wise.
In some cases, we may also assume prior knowledge on the signal, such as sparsity or known support. 
The phaseless periodic STFT setup  is a special case of~\eqref{eq:pr}, where the matrix $A$ represents samples of the STFT operator. 
The first mathematical and statistical works on phase retrieval focused on a random ``generic'' matrix $A$, see for example~\cite{candes2015phase,balan2006signal,balan2009painless,candes2015phasewirtinger,sun2018geometric,goldstein2018phasemax}.
These works were extended to the coded diffraction model~\cite{candes2015phase,gross2017improved}, which resembles our model, but the deterministic sliding window is replaced by a set of random masks. 
Unfortunately, the measurements in practice are not random, and thus this line of
work is of theoretical rather than applicable interest.
%remains of rather academic interest. 

In recent years, there has been a growing interest in deterministic phase retrieval setups  that better describe 
imaging applications. In particular, the \emph{non-periodic} phaseless STFT problem with a known window was studied in \cite{nawab1983signal,jaganathan2016stft,marchesini2016alternating,bendory2019blind}.
This setup differs from our case since out of range indices are set equal to zero, and there are $\left\lceil (N+W-1)/L \right\rceil$ distinct short-time sections instead of  
$R=N/\gcd(N,L)$ in the periodic case. 
The authors of~\cite{jaganathan2016stft} proved unique recovery with $\sim N$ samples, and also proposed a convex program to recover the signal. 
The blind case was studied  by two of the authors in~\cite{bendory2019blind}, who proved that the signal and the window
can be recovered, up to trivial ambiguities of dimension $L$, from $\sim 10(N+W)$ 
measurements.  In this work, we show that in the periodic case, $\sim 4N$ and  $\sim 4N+2W$ measurements are enough in the known-window case and blind case, respectively. The continuous STFT setup was studied in~\cite{alaifari2021gabor,grohs2019stable,grohs2021stable,fannjiang2020blind}.

More phase retrieval applications whose fundamental conditions for unique recovery were studied include ultra-short pulse characterization using frequency-resolved optical gating (FROG)~\cite{bendory2020signal,bendory2017uniqueness,trebino2000frequency} or using multi-mode fibers~\cite{bendory2021signal,xiong2020deep,ziv2020deep}, X-ray crystallography (recovering a sparse signal from its Fourier magnitude)~\cite{elser2018benchmark,bendory2020toward}, recovering a one-dimensional signal from its Fourier magnitude~\cite{huang2016phase,MR3842644,beinert2015ambiguities,edidin2019geometry}, holographic phase retrieval~\cite{barmherzig2019holographic,barmherzig2019dual}, and vectorial phase retrieval~\cite{raz2013vectorial,raz2011vectorial}.

%The next section introduces and discusses the main results of this paper, and 
%Section~\ref{sec:motivation} elaborate about  its motivation.

\section{Main results} \label{sec:main_results}
We begin by stating our result for the known-window case.
\begin{theorem}[Known window]\label{thm.known_window}
	For a generic known window vector $w \in \C^W$, a generic vector $x \in \C^N$ can be recovered, up to a global phase, from
	\begin{equation*}
		2(2W-1) + \left\lceil{{(4 \alpha -1)(N - (W+\alpha))}\over{\alpha}}\right\rceil
%\lceil 4N - \frac{N-W}{\alpha}-4\alpha-1 \rceil		
	\end{equation*}
	phaseless periodic STFT measurements of step length $L$, where $\alpha =
	\gcd(L,N)$. 
\end{theorem}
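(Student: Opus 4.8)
The plan is to pass to the autocorrelation picture and then recover $x$ by a boundary initialization followed by a holographic propagation, with the measurement budget allocated exactly as in the statement. First I would note that for a fixed section index $r$ the numbers $|Y_{m,r}(x,w)|^2$, as $m$ ranges over $\Z/N$, are the power spectrum of the windowed section $z_r[n] := x[n]\,w[rL-n]$, which is supported on the $W$ consecutive indices $\{rL-W+1,\dots,rL\}$ (modulo $N$). Since $\gcd(L,N)=\alpha$, the set of section positions $\{rL \bmod N : 0\le r\le R-1\}$ is exactly the set of multiples of $\alpha$, so after reindexing the sections sit at $0,\alpha,2\alpha,\dots$ and two consecutive sections overlap in $W-\alpha$ indices. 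Taking the inverse DFT in $m$ turns a section's measurements into its autocorrelation coefficients
\[
 c_r[k]=\sum_{n} x[n]\,\overline{x[n-k]}\;w[rL-n]\,\overline{w[rL-n+k]},\qquad |k|\le W-1,
\]
so that $2W-1$ frequency samples recover the full autocorrelation of a section, while fewer samples give only partial (nonlinear) information about it.

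Next I would initialize on the boundary. A single section's autocorrelation determines $z_r$ only up to the usual one-dimensional Fourier phase-retrieval ambiguities (global phase together with the $2^{W-1}$ zero-flip/conjugate-reversal solutions), so one section is far from enough. Using the \emph{two} overlapping full sections at positions $0$ and $\alpha$ --- costing $2(2W-1)$ measurements --- I would impose that the two recovered section signals agree, after dividing out the known window values, on their $W-\alpha$ common indices; for a generic window this consistency cuts the ambiguity down to a single global phase and pins $x[0],\dots,x[W+\alpha-1]$.

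I would then propagate holographically. Suppose $x$ is known on indices $<t$; the section that introduces the $\alpha$ new entries $x[t],\dots,x[t+\alpha-1]$ overlaps the known region in its remaining $W-\alpha$ indices, so on that overlap the section signal is a \emph{known reference}. Writing $z_r=z^{\mathrm{ref}}+z^{\mathrm{new}}$, each measured power-spectrum sample $|Y_{m,r}|^2$ equals $|\widehat{z^{\mathrm{ref}}}[m]|^2+2\Re\!\big(\overline{\widehat{z^{\mathrm{ref}}}[m]}\,\widehat{z^{\mathrm{new}}}[m]\big)+|\widehat{z^{\mathrm{new}}}[m]|^2$, a holographic phase-retrieval problem for the $\alpha$ unknown entries against a known reference. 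The claim I would prove is that for a generic window $4\alpha-1$ well-chosen samples $m$ from this one section determine the $\alpha$ new entries uniquely (the reference fixes the otherwise-free phase). Chaining this over the $\big(N-(W+\alpha)\big)/\alpha$ steps needed to reach index $N-1$ accounts for the remaining $\big\lceil(4\alpha-1)(N-(W+\alpha))/\alpha\big\rceil$ measurements, and together with the boundary term $2(2W-1)$ this reproduces the stated bound.

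The hard part will be the genericity in the propagation step: showing that, for almost every window $w$, the quadratic system coming from $4\alpha-1$ power-spectrum samples of a single section has a \emph{unique} solution for the $\alpha$ new entries given the reference, and that this holds \emph{simultaneously} for every step. I would handle this by exhibiting one explicit window for which the relevant Jacobian (or a resultant/discriminant certifying finiteness and then uniqueness of the quadratic system) is nonzero, and then invoke semicontinuity to pass to a generic $w$; the uniform-over-steps statement follows because all steps are governed by the same window through shifts. A secondary technical point is the periodic wraparound near the ends of the signal, which slightly changes the supports of the $w[rL-n]\,\overline{w[rL-n+k]}$ factors and must be tracked when counting which lags are available at the boundary; this is where the precise constants (the $-1$ in $4\alpha-1$ and the offset $W+\alpha$) are pinned down.
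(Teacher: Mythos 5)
Your architecture is the same as the paper's: initialize with the two full overlapping sections at positions $0$ and $\alpha$ (costing $2(2W-1)$ measurements), then propagate $\alpha$ new entries per section using $4\alpha-1$ intensity samples each, and the measurement accounting matches. The genuine gap is in the initialization step, which you dispose of in one sentence ("for a generic window this consistency cuts the ambiguity down to a single global phase") but which is where the paper does its real work. Each section's autocorrelation determines that section only up to $2^{W-1}$ zero-flip candidates times a phase, so the overlap consistency must rule out $(2^{W-1})^2-1$ nontrivial combinations of flips; the constraints are the linear relations $w[j+\alpha]\,y_0[j]=w[j]\,y_\alpha[j+\alpha]$ for $j=0,\dots,W-1-\alpha$, and it is not at all automatic that only the diagonal combination survives generically. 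The paper isolates this as Proposition \ref{prop.linearhay} and proves it in Appendix \ref{sec:proof_prop_linhay} by decomposing the incidence variety into $(2^{W-1})^2$ potential components and exhibiting two explicit test pairs --- one where $\hat y_0$ has all roots on the unit circle (so $y_0$ is rigid) and one where $\hat y_\alpha$ does --- to show no off-diagonal component can dominate. Without some such argument your first step only pins $x[-W+1],\dots,x[\alpha]$ up to an unresolved discrete ambiguity, and the whole propagation collapses.

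Ironically, the explicit-example-plus-degeneration strategy you propose for the propagation step is exactly the tool needed for the initialization; for the propagation itself the paper simply cites \cite[Proposition IV.3, Corollary IV.4]{bendory2019blind}, which already gives that a vector with unknown entries confined to $S=[0,\alpha-1]$ is generically determined by $2|S-S|-1+2|S|=4\alpha-1$ samples of its Fourier intensity function. If you do carry out your own genericity argument there, note that a nonvanishing Jacobian only certifies local rigidity/finiteness, not global uniqueness of the quadratic system; you would still need an incidence-variety or explicit-enumeration argument to exclude spurious isolated solutions, which is why the paper leans on the cited result.
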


\begin{remark}
We say that a condition holds for generic signals $x$ and windows $w$ if the set of signals and windows for which the condition does not hold is defined by polynomial conditions. In particular, the set of pairs $(x,w) \in \C^N \times \C^W$
for which the conclusion of Theorem \ref{thm.known_window}  {holds is dense and its complement} has measure zero. {For a precise definition of the term generic see Definition \ref{def.generic}.}
\end{remark}

It is not hard to deduce that Theorem~\ref{thm.known_window} implies that the number of required measurements for signal recovery is smaller than $$4N-\frac{N-W}{\alpha}-2<4N,$$ while the number of parameters to be recovered is $2N$.
%We note that the bound increases linearly with $N$ and $W$ and non-linearly with $\alpha$. 
%In particular,
If $N$ is a prime number, then $\alpha =1$ (independently of $L$) and the bound improves to $\sim 3N+W$. For a long window $W\approx N$, the bound tends to $4N$. 
Figure~\ref{fig:known_window} presents the bound of Theorem~\ref{thm.known_window} for $N = 100$ as a function of~$W$ for various values of $L$. 
As can be seen, the curves are bounded by $4N$. 
%If $N$ is a prime number, then $\alpha=1$ and the curves of all values of $L$ coincide. 

\begin{figure}
	\begin{subfigure}[ht]{0.45\columnwidth}
		\centering
		\includegraphics[width=\columnwidth]{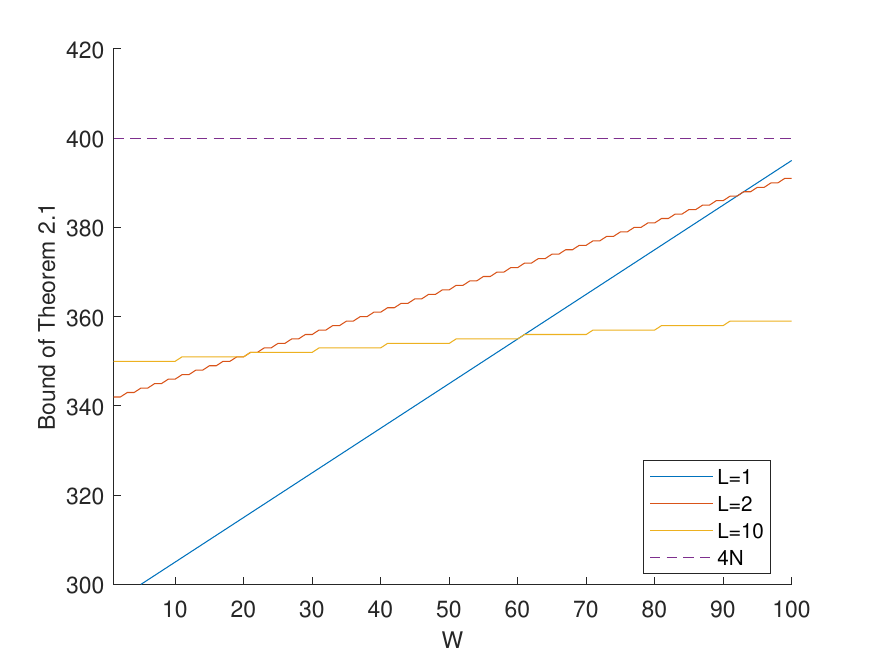}
		\caption{\label{fig:known_window} The bound of Theorem~\ref{thm.known_window}}
	\end{subfigure}
	\hfill
	\begin{subfigure}[ht]{0.45\columnwidth}
		\centering
		\includegraphics[width=\columnwidth]{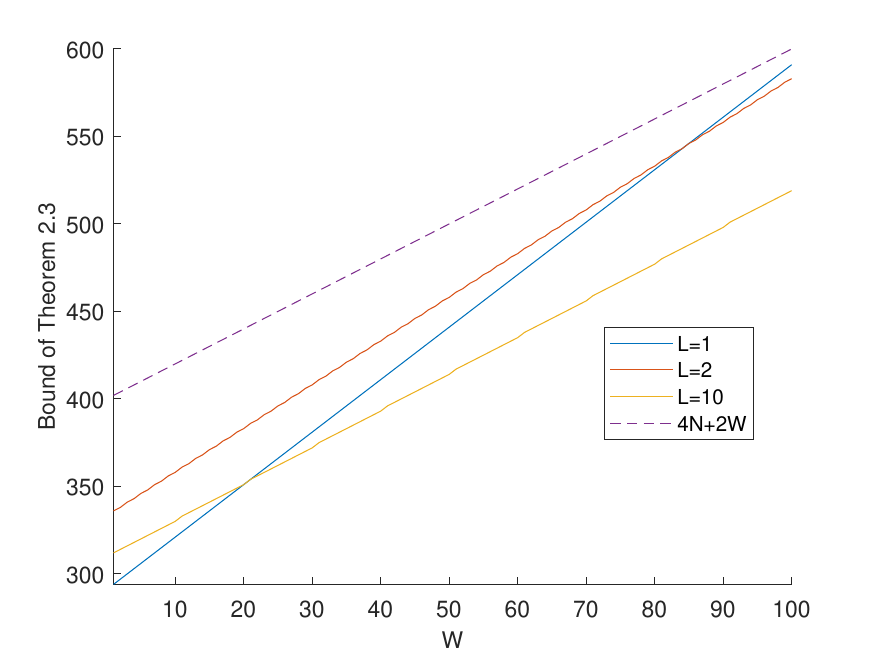}
		\caption{\label{fig:blind} The bound of Theorem~\ref{thm.unknown_window}}
	\end{subfigure}
	\caption{\label{fig:theoretical_bounds} The bounds of Theorems~\ref{thm.known_window} and~\ref{thm.unknown_window} for $N=100$  as a function of the window length $W$, for various values of $L$.}
\end{figure}

\rev{
 \begin{remark}
 Given a vector $y \in \C^N$, let $T_\ell y$ denote the cyclically shifted
  vector defined by $(T_\ell y)[n] = y[n - \ell]$ with all indices taken modulo
  $N$. Likewise, define the modulated vector $M_my$ by setting $(M_my)[n] = \omega^{mn} y[n]$, where $\omega = e^{2\pi \iota/N}$.
  For a given generic window vector $w \in \C^W$, the 
    vectors $f_{m,r} = M_mT_{rL}w$ form an $NR$-element frame in $\C^N$ consisting of
  vectors whose supports all have length $W$.
    With this notation, the phaseless STFT measurement $|Y_{m,r}(x)|$
   equals to the phaseless frame measurement $|\langle x,f_{m,r}\rangle |$.
    Theorem \ref{thm.known_window} implies that a subset of the $\{f_{m,r}\}$ forms a highly structured frame with less than $4N$ elements for which it is possible to recover a generic vector,  up to global phase, from its phaseless frame measurements.
     By contrast, \cite[Theorem 3.4]{balan2006signal} implies that if $M \geq 2N$ then for a generic $M$-element frame it is possible to recover a generic vector, up to a global phase, from its phaseless frame measurements. Also, note that if $M \geq 4N-4$ then \cite[Theorem 1.1]{conca2015algebraic} states that for a generic $M$-element frame {\em every} vector can be recovered, up to  a global phase, from its phaseless frame measurements.
  \end{remark}
}
  
Our second result deals with the blind case where the window $w$ is unknown, and therefore there are $2N+2W$  parameters to be recovered. 
%In this case, the dimension of the ambiguity group is $\alpha+2$---substantially larger than the dimension one ambiguity in the 
%known window case---and is presented explicitly in Proposition~\ref{prop.Gaction}.

\begin{theorem}[Unknown window] \label{thm.unknown_window}
	A generic pair $(x, w) \in \C^N \times \C^W$ can be recovered, up to
	a group   of trivial ambiguities of dimension $\alpha+2$ defined in  Proposition~\ref{prop.Gaction}, from at most
	\begin{equation*}
%3(2W-1) + (4\alpha -1){N- (W+\alpha)\over{\alpha}} 
%4N+2W-\frac{N-W}{\alpha} - 2-4\alpha%,
3(2W-1) + \left\lceil {{(4 \alpha-1)(N -(W +2\alpha))}\over{\alpha}} \right\rceil
	\end{equation*}
	phaseless periodic STFT measurements of step length $L$, where $\alpha = \gcd(L,N)$. 
\end{theorem}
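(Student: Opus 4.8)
The plan is to convert the phaseless data into windowed autocorrelations and then to piggy-back on the known-window analysis of \Cref{thm.known_window}, paying for the unknown window with the extra block of measurements and with the larger ambiguity group. Fixing a section $r$ and writing $g_r[n]=x[n]\,w[rL-n]$, the quantity $Y_{m,r}(x,w)$ is the discrete Fourier transform of $g_r$ in the variable $m$; hence, running $m$ over enough residues and applying an inverse transform recovers the lag-$k$ autocorrelations
\begin{equation*}
  a_r[k]=\sum_{j} \big(w[j]\,\overline{w[j+k]}\big)\big(x[rL-j]\,\overline{x[rL-j-k]}\big).
\end{equation*}
Setting $q_k[j]=w[j]\overline{w[j+k]}$ and $p_k[n]=x[n]\overline{x[n-k]}$, this exhibits $a_{\bullet}[k]$ as the cross-correlation of $q_k$ with $p_k$, sampled along the progression $rL\bmod N$, which sweeps the $R=N/\alpha$ multiples of $\alpha$. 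The key structural fact is that $q_k$ is supported on $0\le j\le W-1-|k|$, so its support shrinks as $|k|$ grows and degenerates to the single coefficient $q_{W-1}[0]=w[0]\overline{w[W-1]}$ at the top lag.

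First I would record and fix the trivial ambiguities: the bilinearity of $Y$ makes $(x,w)\mapsto(\lambda x,\mu w)$ with $|\lambda\mu|=1$ measurement-preserving, and the subsampling by $L$ permits additional modulation-type symmetries; together these form the $(\alpha+2)$-dimensional group of Proposition~\ref{prop.Gaction}, modulo which all subsequent statements are made. Because \Cref{thm.known_window} already recovers a generic $x$ once the window is available, I would organise the blind recovery as (i) pinning down the window coefficients $w[j]$ (equivalently the sequences $q_k$) up to the group, and (ii) feeding these into the known-window propagation. Step (i) exploits the shrinking support of $q_k$: the top lags furnish few-term, eventually single-term, equations that seed a joint recovery, after which one runs an induction on the section index. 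At each stage the signal- and window-products already determined turn $a_r[k]=\sum_j q_k[j]\,p_k[rL-j]$ into a small linear system whose new unknowns are a bounded number of fresh $q_k[j]$ and $p_k[rL-j]$; solving it propagates the phases of both $w$ and $x$, while the magnitudes $|x[n]|,|w[j]|$ come from the $k=0$ relations $a_r[0]=\sum_j|w[j]|^2|x[rL-j]|^2$.

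The genericity is handled exactly as in the remark following \Cref{thm.known_window}: each propagation step reduces to the nonvanishing of a determinant or Jacobian in the entries of $(x,w)$, and it suffices to produce one explicit pair (or a convenient degeneration) for which it is nonzero, forcing nonvanishing on a Zariski-dense open set. Counting the measurements then amounts to tallying which pairs $(m,r)$ must be read off to obtain the lags used at each stage; the seed and the window cost the three blocks $3(2W-1)$, while the bulk propagation along $x$ contributes $\lceil(4\alpha-1)(N-(W+2\alpha))/\alpha\rceil$, the extra $(2W-1)$ relative to \Cref{thm.known_window} being the price of the unknown window and the shift from $W+\alpha$ to $W+2\alpha$ reflecting the wider seed forced by the larger ambiguity group.

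I expect the main obstacle to be precisely the step absent from the known-window setting: since $a_r[k]$ is bilinear in the unknown $q_k$ and $p_k$, a single lag cannot separate window from signal, and one must show that the overlapping constraints coming from successive window positions admit no spurious solution beyond the group of Proposition~\ref{prop.Gaction}. Equivalently, the hard part is proving that the joint solution variety has exactly the expected dimension $\alpha+2$, so that after quotienting by the trivial ambiguities the fibre is a single point; this is where genericity of both $x$ and $w$—rather than of $x$ alone—is indispensable.
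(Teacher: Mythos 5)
Your overall architecture --- spend three blocks of $2W-1$ measurements to pin down the window together with a seed segment of $x$, then propagate through the remaining entries at a cost of $4\alpha-1$ measurements per block of $\alpha$ --- matches the paper's, and your measurement tally is consistent with it. But there is a genuine gap at exactly the point you flag as ``the main obstacle,'' and it cannot be deferred. For a fixed section $r$, the autocorrelation lags $a_r[k]$ are equivalent data to the Fourier intensity function $A_{y_{rL}}$ of the length-$W$ vector $y_{rL}=T_{rL}x\circ w$, and one-dimensional Fourier phase retrieval of such a vector has, generically, $2^{W-1}$ solutions up to global phase, obtained by reflecting subsets of the roots of $\hat{y}_{rL}(\omega)$ across the unit circle. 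Your ``small linear system whose new unknowns are a bounded number of fresh $q_k[j]$ and $p_k[rL-j]$'' is not linear in the quantities you ultimately need: those unknowns are quadratic monomials in $(x,w)$, and lifting them back to the individual entries is precisely where the $2^{W-1}$-fold ambiguity lives. Nothing in your proposal rules out a spurious simultaneous root-flip across the coupled sections. (Also, $a_r[0]=\sum_j|w[j]|^2|x[rL-j]|^2$ is a single sum per $r$; it does not hand you the individual magnitudes $|x[n]|$, $|w[j]|$ as asserted.)

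The paper closes this gap with Proposition~\ref{prop.hardhay}: the seed sections $y_0,y_\alpha,y_{-\alpha}$ satisfy the quadratic compatibility relations \eqref{eq_blind_triple}, which cut out a variety $Z$, and an incidence-variety argument shows that of the $(2^{W-1})^3$ a priori components of the solution set only the trivial one dominates $Z$. That proof is not a generic-Jacobian computation; it requires constructing explicit test triples in $Z$ whose Fourier polynomials have all roots on the unit circle (so they admit no nontrivial flips) or all roots strictly inside or outside it (so any flip visibly violates \eqref{eq_blind_triple} by changing the magnitude of a leading or trailing coefficient, via Cauchy's root bound). A parallel statement, Proposition~\ref{prop.linearhay}, is needed even in the known-window case, so the difficulty is not specific to blindness. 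Your meta-strategy of exhibiting one explicit pair where a determinant is nonzero is the right shape of argument, but the objects whose degeneration must be controlled are the root configurations of $\hat{y}_{rL}(\omega)$, not Jacobians of linear systems. The recursive step likewise silently uses that once $W-\alpha$ entries of $y_{j\alpha}$ are known, $4\alpha-1$ intensity samples determine the rest (\cite[Corollary IV.3]{bendory2019blind}); this too is a phase-retrieval uniqueness statement rather than a linear-algebra one, and without these ingredients the induction does not get off the ground.
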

Once again, the set of pairs $(x,w) \in \C^N \times \C^W$
for which the conclusion of Theorem \ref{thm.known_window} {holds is dense and its complement} has measure 0.

Theorem~\ref{thm.unknown_window} shows that the number of measurements is bounded by 
$$4N+2W-\frac{N-W}{\alpha} - 3<4N +2W,$$
exceeding the number of parameters to be recovered by a constant smaller than 2. 
%As in the non-blind case, the bound increases linearly in $N$ and $W$. %and non-linearly in $\alpha$. 
For $\alpha=1$, the bound reads $\sim 3N+3W$: much smaller than $4N+2W$ for $W\ll N$, which is the typical situation in ptychography---a chief motivation of this paper.
However, in contrast to the known-window case, in the blind case $\alpha$ has a big impact on the dimensionality of the ambiguity group:  the dimension of the ambiguity group is $\alpha+2$, substantially larger than the dimension one ambiguity in the 
known-window case.
 Therefore, if possible, in this case it is preferable to choose a prime $N$.  
Figure~\ref{fig:blind} presents the bound of Theorem~\ref{thm.unknown_window} for $N = 100$ as a function of $W$. 

%\begin{figure}
%	\begin{subfigure}[ht]{0.45\columnwidth}
%		\centering
%		\includegraphics[width=\columnwidth]{theoretical_bounds_blind_N100.pdf}
%		\caption{$N=100$}
%	\end{subfigure}
%	\hfill
%	\begin{subfigure}[ht]{0.45\columnwidth}
%		\centering
%		\includegraphics[width=\columnwidth]{theoretical_bounds_blind_N101.pdf}
%		\caption{$N=101$}
%	\end{subfigure}
%	\caption{\label{fig:theoretical_bounds_blind} The bound of Theorem~\ref{thm.known_window} for $N=100$ and $N=101$ (a prime number) as a function of the window length $W$. The red dashed line indicates $4N+2W$.}
%\end{figure}

The proofs of both theorems rest on extensions of technical results proved in~\cite{bendory2019blind}. The key point is that $\sim 4W$ (known window) or $\sim 6W$ (unknown window) phaseless periodic STFT measurements determine the Fourier intensity functions of short sequences
of vectors in $\C^W$ that satisfy certain polynomial constraints. Using the method of \cite[Theorem 5.3]{edidin2019geometry}, we show that the Fourier phase retrieval problem is solvable for generic vectors satisfying these constraints. Knowledge of these short sequences gives information about some of the entries in the vector $x$
and in the blind case fully determine the window.
We then use~\cite[Proposition IV.2]{bendory2019blind} to bound the number of further phaseless STFT measurements needed to fully determine the signal $x$.

\section{Proofs} \label{sec:proofs}

\subsection{Preliminaries} 
\subsubsection{Notation about the discrete Fourier transform} \label{sec.notation_fourier}
In this section we establish some notation about the discrete Fourier transform and Fourier intensity function. For a reference, see \cite{beinert2015ambiguities, edidin2019geometry}.

If $y \in \C^W$ is a vector, let 
$\hat{y}(\omega)=y[0] + y[1] \omega + \ldots y[W-1] \omega^{W-1}$ be the polynomial on the unit circle $\omega = e^{-\iota \theta}\in S^1$. %and $\omega = e^{-\iota \theta}$
%is a coordinate on the circle. 
The discrete Fourier transform vector
$\hat{y}$ is obtained by evaluating this polynomial at the $W$-th roots of unity;
i.e.,
$$\hat{y} = \left(\hat{y}(1), \hat{y}(\eta), \ldots, \hat{y}(\eta^{W-1})\right),$$
where $\eta = e^{-2\pi \iota/W}$.

By abuse of notation, we will sometimes view
$\omega$ as a coordinate on the entire complex plane and then we can speak
about the {\em roots} of $\hat{y}(\omega)$. We typically assume that our vectors
satisfy $y[0], y[W-1] \neq 0$ so the polynomial $\hat{y}(\omega)$ will have
$W-1$ (not necessarily distinct) roots in~$\C$.
If $(\beta_1, \ldots, \beta_{W-1})$ are the roots of $\hat{y}(\omega)$, 
then we can write
$$\hat{y}(\omega) = y[W-1](\omega - \beta_1) \ldots (\omega - \beta_{W-1}).$$

Given a vector $y=(y[0], \ldots , y[W-1])$, the Fourier intensity  of
$y$ is $A_y(\omega) = |\hat{y}(\omega)|^2$. Expanding out and using the fact that
$\overline{\omega} = \omega^{-1}$ on the circle $S^1$, the Fourier
intensity function factors as \cite{edidin2019geometry}
\begin{equation}
	A_y(\omega) = \omega^{1-W} \overline{y[0]} y[W-1]
	(\omega - \beta_1)\left(\omega - {1\over{\overline{\beta_1}}}\right)
	\ldots (\omega - \beta_{W-1})\left(\omega - {1\over{\overline{\beta_{W-1}}}}\right).
\end{equation}
(Note that for any complex number $\beta$,
${1\over{\overline{\beta}}}= {\beta\over{|\beta|^2}}$, a fact we will use extensively.)
If $y'$ is another vector
such that $A_y = A_{y'}$, then the proof of \cite[Theorem 3.1]{MR3842644} implies
$$\hat{y'}(\omega) = e^{\iota \theta}|y[W-1]| \prod_{i \in I}|\beta_i|\left(\omega - {1\over{\overline{\beta}_i}}\right) \prod_{i \notin I} (\omega - \beta_i),$$
for some subset $I \subset [1, W-1]$.

%The STFT measurements are the $NR$ complex numbers
%\begin{equation} \label{eq.ymr}
%	Y_{m,r}(x,w) = \sum_{n = 0}^{N-1} \eta_m^{n} x[n] w[rL -n],\text{ for }0\leq m\leq N-1\text{ and }0\leq r\leq R-1
%\end{equation}
%where $\eta_m = e^{-2\pi \iota m/N}$ (so $\eta_m = \eta_1^m$). Because our signal
%is assumed to be peridoic the indices are taken modulo $N$ and $w_n =0$ for $W \leq n \leq N-1$. Note that this differs from the non-periodic case
%where out of range indices are set equal to 0. It also means that there are
%$R=N/\gcd(N,L)$ distinct short-time sections instead of $\lceil N+W-1/L \rceil$ in the non-periodic case.
\subsubsection{Notation for the STFT measurements} 
For our proofs, it is convenient to use the fact that $x$ is periodic and that
$w[n] =0$ for $W \leq n \leq N-1$ to rewrite the STFT~\eqref{eq.stft} as
\begin{equation} \label{eq.ymr_countdown}
	Y_{m,r}(x,w) = \eta_m^{rL}\sum_{n=0}^{N-1} \eta_{-m}^n x[rL -n]w[n],
\end{equation}
where $\eta_m:=e^{2\pi \I m/N}$, so $\eta_{-m}:=e^{-2\pi \I m/N}$ and $\eta_m^n:=e^{2\pi \I mn/N}$.
Let  $T_{rL} x \circ w$, where $$T_{rL}x = (x[rL], x[rL -1], \ldots, x[N-1-rL])\in\mathbb{C}^{N}$$
be the vector $x$ shifted by $rL$, and $\circ$ denotes the entry-wise product.
Thus, for fixed~$r$, the measurements $\{Y_{m,r}\}_{m=0}^{N-1}$
determine $N$ values of the Fourier
transform of the vector
$y_{rL} = T_{rL} x\circ w$,
where the indices are taken modulo $R$.
The phaseless STFT measurements $|Y_{m,r}|_{m=0}^{N-1}$ give $N$
values of the Fourier intensity function $A_{y_{rL}}$ of the vector
$y_{rL}$.

\rev{
  \subsubsection{Terminology from algebraic geometry}
\begin{definition} \label{def.generic}
   A property ${\bf P}$ holds {\em generically} on $\C^M$ if the set
  $Z\subset \C^M$ where property ${\bf P}$  does not hold is contained in a subset $Y$ of $\C^M$ defined by a non-zero polynomial. More generally, if
  $X \subset \C^M$ is a subset defined by polynomial equations, then a property
  ${\bf P}$ holds generically on
  $X$ if the set $Z \subset X$ where property ${\bf P}$ does not hold is contained in a subset of $X$, which is defined by a polynomial which does not vanish identically on $X$.
\end{definition}
}

\subsection{Proof of Theorem~\ref{thm.known_window}}
Since $w$ and $x$ are generic,  we  assume that $w[0], \ldots , w[W-1]$
and $x[0], \ldots , x[N-1]$ are all non-zero. By applying the action
the group of ambiguities, $S^1$, we can also assume that $x[0]w[0]$ is real and positive.
Since $w$ is fixed and known, in this section we will use the notation
$Y_{m,r}(x)$ instead of $Y_{m,r}(x,w)$.

Let $x'$ be a solution to the system of quadratic equations
$\{|Y_{m,r}(x')|^2 = |Y_{m,r}(x)|^2\}$. We will use 
a recursive method
to show that for generic $x$, there is a unique solution $x'$ 
with $x'[0]w[0]$ real and positive and that $x'$ can be determined
using at most
%\TODO{We don't have the extra $\alpha$ in Theorem~\ref{thm.known_window}.}
$$2(2W-1) + \left\lceil (4 \alpha -1) {N - (W+\alpha)\over{\alpha}}\right\rceil < 4N$$ phaseless
STFT measurements.
The proof consists of two main stages, outlined below. 

\subsubsection{Step 1. Determining $x[\alpha], x[\alpha-1], \ldots , x[-W+1]$ with
$4W-2$ phaseless STFT measurements}
Using $2W-1$ phaseless measurements of the form $|Y_{m,0}|$ for $2W-1$ different values of $m$  we can obtain the Fourier intensity function of the vector
$$y_0 = T_0x \circ w = (x[0]w[0], x[-1]w[1],\ldots , x[-W+1]w[W-1]).$$
Likewise,  $2W-1$ phaseless measurements of the form $|Y_{m,r_1}|$, 
where $r_1L \equiv \alpha \bmod R$, determine
the Fourier intensity function of the vector
$$y_\alpha = T_\alpha x \circ w = (x[\alpha]w[0], \ldots , x[\alpha -W +1]w[W-1]).$$ Note that because $\alpha = \gcd(L,N)$ and $R = N/\alpha$, there
is a unique $r_1$ with $0 < r_1 \leq R-1$ such that $r_1L \equiv \alpha \bmod R$.
The two vectors $y_0$ and $y_\alpha$ are not algebraically independent as they satisfy the linear equations

 \begin{equation}\label{eq.linear_relations}
    w[j+\alpha] y_{0}[j] = w[j]y_{\alpha}[j+\alpha], \quad  j=0, \ldots, W-1-\alpha.
 \end{equation} %\tb{$y_{\alpha}[j+\alpha]$?}
 The proof of the following result is somewhat technical and is given in Appendix~\ref{sec:proof_prop_linhay}.
 Recall from Section \ref{sec.notation_fourier}
 that if $y \in \C^W$, $A_y$ denotes the Fourier intensity function
 $|\hat{y}(\omega)|^2$.
 \begin{proposition} \label{prop.linearhay}
   A generic pair of vectors $(y_0, y_\alpha)$
   %\tb{$(y_0, y_\alpha)$?}
   satisfying equations \eqref{eq.linear_relations} is determined, up to a global phase, from the Fourier intensity functions of $y_0$ and $y_\alpha$.
   Precisely, if $(y'_0, y'_\alpha)$ is a pair of vectors
   satisfying equations~\eqref{eq.linear_relations} such that $A_{y_0} = A_{y'_0}$
   and $A_{y_\alpha} = A_{y'_\alpha}$,   then $(y'_0,y'_\alpha) = e^{\iota \theta}(y_0,y_\alpha)$ for some
   $e^{\iota \theta} \in S^1$.
 \end{proposition}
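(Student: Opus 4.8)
The plan is to reduce the proposition to a genericity statement about the finite ambiguity group of the Fourier intensity, in the spirit of \cite[Theorem 5.3]{edidin2019geometry}. First I would invoke the description recalled at the end of Section~\ref{sec.notation_fourier}: if $A_{y_0'}=A_{y_0}$ then, writing $\beta_1,\dots,\beta_{W-1}$ for the roots of $\hat{y_0}(\omega)$,
\begin{equation*}
\hat{y_0'}(\omega)=e^{\iota\theta_0}\,|y_0[W-1]|\prod_{i\in I}|\beta_i|\Bigl(\omega-\tfrac{1}{\overline{\beta_i}}\Bigr)\prod_{i\notin I}(\omega-\beta_i)
\end{equation*}
for some subset $I\subseteq\{1,\dots,W-1\}$ and some $\theta_0\in\R$, and similarly $\hat{y_\alpha'}$ is determined by a subset $J$ and a phase $\theta_\alpha$. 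Thus, before imposing \eqref{eq.linear_relations}, the pairs $(y_0',y_\alpha')$ with the prescribed intensities form a \emph{finite} list of $S^1\times S^1$-orbits indexed by $(I,J)$, and the proposition asserts that generically the linear relations single out the diagonal orbit $I=J=\varnothing$, $\theta_0=\theta_\alpha$.

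Next I would set up the constraint variety $V\subseteq\C^W\times\C^W$ cut out by \eqref{eq.linear_relations}. Solving the relations shows $V$ is the image of $\C^{W+\alpha}$ under the linear map sending $(x[\alpha],\dots,x[-W+1])$ to the windowed pair $y_0[j]=x[-j]w[j]$, $y_\alpha[j]=x[\alpha-j]w[j]$; in particular $V$ is irreducible of dimension $W+\alpha$, and $y_0,y_\alpha$ are two overlapping $w$-windowed segments of a single sequence that agree, up to the weights $w$, on the first $W-\alpha$ entries of $y_0$ and the last $W-\alpha$ entries of $y_\alpha$. It then suffices to prove that for each nontrivial pair $(I,J)$ the \emph{bad locus} $B_{I,J}\subseteq V$ of pairs whose $(I,J)$-reflection can be rephased to satisfy \eqref{eq.linear_relations} again is a proper Zariski-closed subset; the finite union $\bigcup_{(I,J)\neq(\varnothing,\varnothing)}B_{I,J}$ is then proper, and on its complement the only solution is the global-phase one, and matching the phases in \eqref{eq.linear_relations} forces $\theta_0=\theta_\alpha$.

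To describe $B_{I,J}$ I would rewrite the surviving constraint as the requirement that the $W-\alpha$ ratios $w[j]\,y_\alpha'[j+\alpha]\big/\bigl(w[j+\alpha]\,y_0'[j]\bigr)$, for $j=0,\dots,W-1-\alpha$, all equal one common unimodular constant $e^{\iota(\theta_\alpha-\theta_0)}$. Taking absolute values removes the phases and yields $|y_\alpha'[j+\alpha]|/|y_0'[j]|=|w[j+\alpha]|/|w[j]|$, equivalently that the moduli of the reflected vectors are rescaled by the \emph{same} factor on the overlapping entries. Since reflecting a subset of roots is a rational operation on the coefficients of $\hat{y}$, for fixed $(I,J)$ these become polynomial conditions on the parameters $x[\bullet]$ with $w$ held fixed; for the trivial pair they hold identically, so the heart of the matter is to check that for every nontrivial $(I,J)$ at least one of them is \emph{not} an identity on the irreducible $V$.

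The hard part will be exactly this non-degeneracy check. The danger is a conspiracy in which reflections of $y_0$ and of $y_\alpha$ cancel across the overlap, with the fixed ratios $w[j+\alpha]/w[j]$ absorbing part of the discrepancy. I would defeat this by tracking the extreme coefficients most sensitive to a reflection: reflecting $I$ scales $|y_0[W-1]|$ by $\prod_{i\in I}|\beta_i|$ and $|y_0[0]|$ by $\prod_{i\in I}|\beta_i|^{-1}$, and analogously $J$ acts on $y_\alpha$, while the boundary relations $j=0$ and $j=W-1-\alpha$ involve precisely $y_0[0]$ (the constant term of $\hat{y_0}$) and $y_\alpha[W-1]$ (the leading term of $\hat{y_\alpha}$). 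Because $w$ is a fixed generic window, its weights cannot be tuned to cancel these multiplicative factors, so any nontrivial reflection should alter at least one boundary ratio; making this precise reduces to the non-vanishing of an explicit polynomial in the roots of $\hat{y_0},\hat{y_\alpha}$ and the entries of $w$, which I would verify by specializing to a convenient point of $V$ (for instance degenerating all but one root so that every relation but one trivializes). Establishing this non-vanishing for each nontrivial $(I,J)$ completes the proof.
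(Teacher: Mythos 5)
Your overall architecture coincides with the paper's: both reduce the problem to the finite list of root-reflection classes $(I,J)$ of the two Fourier intensity functions, exploit the irreducibility of the linear constraint variety (the paper phrases this as a decomposition of an incidence variety $I_H\subset H\times H$ into at most $(2^{W-1})^2$ components indexed by $(j,k)$, only the diagonal one of which may dominate $H$), and then try to rule out each nontrivial class by exhibiting points of the constraint variety that it misses. The phase bookkeeping ($\theta_0=\theta_\alpha$ forced by the relations on the overlap) is also handled the same way in both arguments.

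The gap is at the decisive step, which you defer to ``specializing to a convenient point of $V$'' without producing a specialization that works. Two problems. First, your extreme-coefficient bookkeeping does not close: the boundary relations at $j=0$ and $j=W-1-\alpha$ pair $y_0[0]$ with $y_\alpha[\alpha]$ and $y_\alpha[W-1]$ with $y_0[W-1-\alpha]$, and $y_\alpha[\alpha]$, $y_0[W-1-\alpha]$ are \emph{interior} coefficients (since $1\le\alpha\le W-1$), whose moduli do not transform by the clean factor $\prod_{i\in I}|\beta_i|^{\pm1}$ under reflection; so no single relation reduces to a product of root moduli against fixed window ratios. Second, ``degenerating all but one root'' does not trivialize anything: a root of high multiplicity still admits many partial reflections, and collapsing roots does not decouple the linear relations. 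The paper's actual test points are different and are the real content of the proof: take $z_0=(1,\dots,1)$, whose Fourier polynomial has all roots \emph{on the unit circle} so that every reflection of $z_0$ returns $z_0$ up to phase, and let $z_\alpha$ be forced by \eqref{eq.linear_relations} on the overlap with generic free tail entries, so that $z_\alpha$ has the full $2^{W-1}$ distinct reflections and the fiber over $(z_0,z_\alpha)$ is a single point; this shows every component through that point has trivial second index. The symmetric choice (all-ones in the $\alpha$ slot) kills the first index, and irreducibility of $H$ finishes. Without this (or some equally explicit) construction, your argument establishes only that the conclusion \emph{would} follow from a non-vanishing you have not verified. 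A smaller technical point: your bad loci $B_{I,J}$ are not obviously Zariski-closed as you define them, since selecting a subset of roots is not a globally rational operation on coefficients; the paper's formulation via components of the incidence variety $\{(z,z'): A_z=A_{z'}\}$ sidesteps this.
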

  We also need the following lemma. 
 \begin{lemma} \label{lem.onto}
   If all coordinates of $w$ are non-zero, then for any pair
   $(y_0, y_\alpha)$ 
   satisfying equations~\eqref{eq.linear_relations}
   there exists a vector $x$ such that $(y_0, y_\alpha) =
   (x \circ w, T_\alpha x \circ w)$.
 \end{lemma}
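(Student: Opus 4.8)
The plan is to build $x$ by reading its entries directly off the coordinates of $y_0$ and $y_\alpha$---which is possible because every $w[j]$ is invertible---and then to check that the two prescriptions agree wherever they overlap. The agreement will turn out to be exactly the hypothesis \eqref{eq.linear_relations}, so there is no real obstruction; the only care needed is in matching indices.

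First I would unwind coordinatewise the two displayed expressions for $y_0$ and $y_\alpha$ from Step~1. They read $y_0[j] = x[-j]\,w[j]$ and $y_\alpha[j] = x[\alpha-j]\,w[j]$ for $j = 0,\ldots,W-1$. Since $w[j] \ne 0$ for every $j$, these force
\[
  x[-j] = \frac{y_0[j]}{w[j]} \quad (j=0,\ldots,W-1), \qquad x[\alpha-j] = \frac{y_\alpha[j]}{w[j]} \quad (j=0,\ldots,W-1).
\]
The first family prescribes $x$ on the residues $\{0,-1,\ldots,-(W-1)\}$ and the second on $\{\alpha,\alpha-1,\ldots,\alpha-(W-1)\}$. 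Their union is the block of $W+\alpha$ consecutive residues $\{-(W-1),\ldots,\alpha\}$, which are pairwise distinct modulo $N$ in the relevant range $W+\alpha\le N$.

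The one substantive point is consistency on the overlap $\{\alpha-W+1,\ldots,0\}$, where both families assign a value to the same $x[k]$. Writing $k=-j$ with $0\le j\le W-1-\alpha$, the first family gives $x[k]=y_0[j]/w[j]$ and the second gives $x[k]=y_\alpha[j+\alpha]/w[j+\alpha]$; these coincide precisely when $w[j+\alpha]\,y_0[j]=w[j]\,y_\alpha[j+\alpha]$, which is exactly \eqref{eq.linear_relations}. Thus the prescriptions are compatible, I may assign $x[k]$ arbitrarily (say $0$) on the remaining $N-(W+\alpha)$ residues, and the resulting $x\in\C^N$ satisfies $(y_0,y_\alpha)=(x\circ w,\,T_\alpha x\circ w)$ by construction. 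I expect the only place to stumble is the index bookkeeping that identifies the overlap condition with \eqref{eq.linear_relations}; once that matching is set up, the lemma follows with no further work.
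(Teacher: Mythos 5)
Your proof is correct and follows essentially the same route as the paper: define $x$ on the block of residues $[-(W-1),\alpha]$ by dividing the entries of $y_0$ and $y_\alpha$ by the corresponding nonzero entries of $w$, set the rest arbitrarily, and observe that the only consistency requirement on the overlap is precisely \eqref{eq.linear_relations}. The paper's version simply partitions the block into two disjoint ranges (using $y_0$ on $[-W+1,0]$ and $y_\alpha$ on $(0,\alpha]$) and leaves the verification implicit, whereas you make the overlap check explicit; this is a presentational difference only.
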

  \begin{proof}
    Given $y_0, y_\alpha$ satisfying~\eqref{eq.linear_relations}, 
    define a vector $x$ by setting 
    \begin{equation*}
    x[n] =	\begin{cases}
    	y_0[n]/w[-n]& \quad \text{if} \quad -W +1 \leq n \leq 0, \\
     	y_\alpha[n]/w[\alpha -n]& \quad  \text{if} \quad 0 <  n \leq \alpha, \\
     	\text{arbitrary}& \quad \text{else}.
    	\end{cases}
    \end{equation*}    
%    $x'[n] = y_0[n]/w[-n]$ if $-W +1 \leq n \leq 0$ and $x'[n] = y_1[n]/w[\alpha -n]$ if $0 <  n \leq \alpha$ and all other entries of $x'$ can be arbitrary. 
    Then, it is easy to check that
    $(y_0,y_\alpha) = (x \circ w, T_\alpha x \circ w)$.
  \end{proof}
  Proposition \ref{prop.linearhay} and Lemma \ref{lem.onto} imply that for
  generic $(x,w)$ the vectors $x \circ w$ and $T_\alpha \circ w$ are uniquely
  determined, up to a global phase, by $2(2W-1)$ phaseless STFT measurements
  of the form $|Y_{0,m}(x)|$ and $|Y_{r_1,m}(x)|$. In particular, if $x'$
  is another vector such that $|Y_{m,0}(x')| = |Y_{m,r_1}(x')|$ for $2W-1$
  distinct values of $m$, then $(x' \circ w, T_\alpha x' \circ w) = e^{\iota \theta}(x\circ w, T_\alpha x \circ w)$. By imposing the condition that $x[0]w[0]$
  is positive real, we can eliminate the global phase ambiguity
  and conclude that $(x'\circ w, T_\alpha x' \circ w) = (x\circ w, T_\alpha x \circ w)$.
  In other words,
  $(x'[0]w[0], \ldots x'[-W +1]w[W-1]) = (x[0]w[0], \ldots , x[W-1]w[W-1])$ and $(x'[\alpha]w[0], \ldots x'[\alpha -W +1] w[W-1]) = (x[\alpha]w[0],\ldots x[\alpha -W+1]w[W-1])$. If we assume that $w[0], \ldots , w[W-1]$ are non-zero, then
  it follows that $x'[n] = x[n]$ for $-W+1 \leq n \leq \alpha$.
  Therefore, we conclude that the $2(2W-1)$ phaseless STFT measurements determine
  $W+\alpha$  entries of the signal $x$, namely, $x[-W+1], x[-W+2],
  \ldots x[0], \ldots x[\alpha]$.

  \subsubsection{Determining the remaining $N-(W+\alpha)$
    entries of $x$ using $(4\alpha -1)\left\lceil \frac{N-(W+\alpha)}{\alpha} \right\rceil$ phaseless STFT measurements.}
  Consider the vector
  $$y_{2\alpha} = (x[2\alpha]w[0], \ldots , x[\alpha+1]w[\alpha -1], x[\alpha]w[\alpha], \ldots , x[-W + 2\alpha +1]x[W-1]).$$
  By Step 1 we know the entries $y_{2\alpha}[n]$ for $n \in [\alpha, W-1] \subset
  [0, W-1]$. In particular, all unknown entries of $y_{2\alpha}$
  lie in the subset $S = [0, \alpha-1]$ of $[0, W-1]$.
  Hence, by~\cite[Proposition IV.3, Corollary IV.4]{bendory2019blind}, 
   a generic vector $y_{2\alpha}$ can
  be recovered from the values of its Fourier
  intensity function $A_{y_{2\alpha}}$ at 
  $2|S-S| -1 + 2|S|$ distinct roots of unity.
  In our case,  $|S| = |S-S| = \alpha$.
  Hence, $y_{2\alpha}$ can be recovered from
  the value of $A_{y_{2\alpha}}$ at  $4 \alpha -1$ distinct roots of unity. Now, the phaseless
  STFT measurements $|Y_{m,r_2}|$, where $r_2 L \equiv 2\alpha \bmod N$, 
  are the values of the Fourier intensity function
  of $y_{2\alpha}$. Hence, we can recover $y_{2\alpha}$ from $|Y_{m,r_2}|$ for $4\alpha -1$ values of $m$.

  We can now complete the proof by induction. If $x[-W+1], \ldots , x[j \alpha]$
  are known, then we require $4 \alpha-1$ phaseless measurements of
  the form $|Y_{r_j,m}|$ to determine the next $\alpha$ entries
   $x[j\alpha +1], \ldots x[(j+1)\alpha]$ of $x$. 
(Here, $r_{j}L \equiv j\alpha \bmod R$).
    It follows that we can determine all entries of $x$ from (at most)
  $$2(2W-1) + \left\lceil \frac{(4\alpha -1)(N-(W+\alpha))}{\alpha} \right\rceil$$ phaseless STFT measurements.

  \subsection{Proof	of Theorem~\ref{thm.unknown_window}}

In this section, we prove that even if the window is not known, we can recover a
generic pair $(x,w) \subset \C^N \times \C^W$ from $\sim 4N+2W$ measurements, up to the action
of the group $G$ of trivial ambiguities. The strategy of our proof follows the proof of Theorem~\ref{thm.known_window}. We begin by explicitly define the group of ambiguities.

\subsubsection{The group of ambiguities} \label{sec.blind_ambiguities}
%We begin by describing the group of ambiguities.

Let $G$ be the group $S^1 \times (\C^*)^\alpha \times \Z_R$,
where we identify $\Z_R$ with the group of $R$-th roots of unity.
We define an action of $G$ on $\C^N \times \C^W$ as follows:
\begin{itemize}
\item  $e^{\iota \theta}\in S^1$ acts by $e^{\iota \theta}(x,w) =
  (e^{\iota \theta}x, e^{\iota \theta}w)$.
  
\item $ \lambda = (\lambda[0], \ldots, \lambda[\alpha-1]) \in (\C^*)^\alpha$
  acts on $x$ 
  by 
   \begin{equation*}
  	\begin{split}
  		\left(\lambda[0] x[0], \lambda[\overline{1}]x[1], \ldots , \lambda[\overline{N-1}]
  		x[N-1]\right),
  	\end{split}	
  \end{equation*}
and on $w$ by
 \begin{equation*}
	\begin{split}
		\left(\lambda[0]^{-1} w[0], \lambda[\overline{-1}]^{-1} w[1], \ldots ,
		\lambda[\overline{-W+1}]^{-1}w[W-1]\right),
	\end{split}	
\end{equation*}
%  \begin{equation*}
%  	\begin{split}
%  	\lambda(x,w)&=
%  	\left((\lambda[0] x[0], \lambda[\overline{1}]x[1], \ldots , \lambda[\overline{N-1}]
%  	x[N-1]), \right.\\ & \left.  (\lambda[0]^{-1} w[0], \lambda[\overline{-1}]^{-1} w[1], \ldots ,
%  	\lambda[\overline{-W+1}^{-1}]w[W-1])\right),
%  	\end{split}	
%  \end{equation*}
  where $\overline{j}$ indicates the residue of $j$ modulo $\alpha$.
\item  If $\omega$ is an $R$-th root of unity, then $\omega$ acts by
  $\omega(x,w) = (x',w')$, where $x'[n] = \omega^{\lfloor n/\alpha \rfloor} x[n]$
  and $w'[n] = \omega^{\lceil n/\alpha \rceil} w[n]$.
  Note that since $R | N$ this action is well defined even though our indices
  are always taken
  modulo $N$.
  \end{itemize}
\begin{proposition} \label{prop.Gaction}
If $g \in G$ then for all $m,r$, we have $|Y_{m,r}(x,w)| =|Y_{m,r}(g(x,w))|$; i.e., the phaseless STFT periodic STFT measurements are invariant under the action of $G$.
\end{proposition}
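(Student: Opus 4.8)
The plan is to exploit the fact that $G = S^1 \times (\C^*)^\alpha \times \Z_R$ is an internal direct product, so it suffices to verify that $|Y_{m,r}|$ is unchanged separately for an element of each of the three factors; a general $g$ is a product of three such elements and the invariances compose. Throughout I would work with the ``countdown'' form \eqref{eq.ymr_countdown},
$$Y_{m,r}(x,w) = \eta_m^{rL}\sum_{n=0}^{N-1} \eta_{-m}^n\, x[rL-n]\,w[n],$$
and repeatedly use the arithmetic input $\alpha = \gcd(L,N)$, which guarantees $\alpha \mid L$ and hence $rL \equiv 0 \pmod \alpha$ for every $r$. The recurring principle is that each factor either scales $Y_{m,r}$ by a unit modulus constant or fixes it outright, so that the phaseless measurement is preserved.

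The $S^1$ case is immediate: $Y_{m,r}$ is bilinear in $(x,w)$, so $Y_{m,r}(e^{\iota\theta}x, e^{\iota\theta}w) = e^{2\iota\theta}Y_{m,r}(x,w)$, whence $|Y_{m,r}|$ is unchanged. For the torus $(\C^*)^\alpha$, I would track the scalars term by term. Under $\lambda$ the summand's two factors become $\lambda[\overline{rL-n}]\,x[rL-n]$ and $\lambda[\overline{-n}]^{-1}\,w[n]$, so the product picks up $\lambda[\overline{rL-n}]\,\lambda[\overline{-n}]^{-1}$. Because $rL \equiv 0 \pmod \alpha$ we have $\overline{rL-n} = \overline{-n}$, so this factor is $1$ for every $n$; thus $Y_{m,r}$ itself (not merely its modulus) is fixed by the torus action.

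The $\Z_R$ factor is the substantive case and the one requiring the most care. Writing $rL = \alpha s$ with $s = rL/\alpha \in \Z$, the summand acquires the factor $\omega^{\lfloor (rL-n)/\alpha\rfloor + \lceil n/\alpha\rceil}$ coming from $x'[rL-n]$ and $w'[n]$. The key computation is the integer identity $\lfloor(rL-n)/\alpha\rfloor = s + \lfloor -n/\alpha\rfloor = s - \lceil n/\alpha\rceil$, using $\lfloor -t\rfloor = -\lceil t\rceil$; adding $\lceil n/\alpha\rceil$ collapses the exponent to $s$, \emph{independent of} $n$. Hence $Y_{m,r}(\omega(x,w)) = \omega^{rL/\alpha}\,Y_{m,r}(x,w)$, and since $|\omega^{rL/\alpha}| = 1$ the phaseless measurement is invariant.

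The main obstacle I anticipate is not any one identity but the bookkeeping of indices modulo $N$, which must be addressed before the term-by-term manipulations above are legitimate. Concretely, I would first confirm that the maps $n \mapsto \omega^{\lfloor n/\alpha\rfloor}$ and $n \mapsto \omega^{\lceil n/\alpha\rceil}$ are well defined on $\Z/N$: replacing $n$ by $n+N$ changes the floor or ceiling by $N/\alpha = R$, and $\omega^R = 1$, so the scalars depend only on the residue class. Only after this does one know that $x'$ and $w'$ genuinely lie in $\C^N$ and $\C^W$, and that the integer identity $\lfloor(rL-n)/\alpha\rfloor = rL/\alpha - \lceil n/\alpha\rceil$ may be applied to the integer representative $rL-n$ appearing in the sum without ambiguity. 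The same well-definedness remark legitimizes the reduction $\overline{rL-n} = \overline{-n}$ used in the torus case.
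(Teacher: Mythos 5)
Your proposal is correct and follows essentially the same route as the paper: factor-by-factor verification, cancellation of the $\lambda$ scalars via $rL\equiv 0\bmod\alpha$, and the collapse of the floor/ceiling exponents to the constant $rL/\alpha$ (the paper's $\omega^j$ with $j\alpha\equiv rL\bmod N$) for the $\Z_R$ factor. The only difference is cosmetic --- you work directly with the sum $Y_{m,r}$ while the paper phrases the computation in terms of the windowed vectors $y_{j\alpha}$ --- and you make explicit the well-definedness check that the paper only remarks on.
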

\begin{proof}
  The action of $S^1$ on $\C^N \times \C^W$ clearly preserves the magnitude
  of the STFT measurements.
  The STFT measurements are measurements of Fourier transform of the
  vectors
  $y_{j\alpha}(x,w) = (x[j\alpha]w[0], \ldots , x[j\alpha -W +1]w[W-1])$,
  where $j\in [0,R-1]$ is defined by equation $j \alpha \equiv rL \bmod N$.
  If $\lambda = (\lambda_0, \ldots , \lambda_{\alpha -1})$,
  then $y_{j\alpha}(\lambda(x,w))[n] =
  \lambda[\overline{j\alpha}-n] \lambda[\overline{-n}]^{-1} x[j\alpha-n]w[n].$
  Since $j\alpha -n \equiv -n \bmod \alpha$, we see that $y_{j\alpha}(\lambda(x,w))[n] = y_{r\alpha}(x,w)[n]$. In other words, the action of $(\C^*)^\alpha$ preserves
  the STFT measurements.
  Finally, if $\omega^{R} = 1$ then
  $y_{j\alpha}(\omega(x,w))[n] = \omega^{j}y_{j\alpha}(x,w)$.
  Hence, the $y_{j\alpha}(x,w)$ and $y_{j \alpha}(\omega(x,w))$ have the same Fourier intensity functions.
\end{proof}

\subsubsection{Strategy of the proof of Theorem \ref{thm.unknown_window}}
Our goal is to prove that for generic $(x,w)$, if $|Y_{m,r}(x',w')| = |Y_{m,r}(x,w)|$ then $(x',w')$ is related to $(x,w)$ by the action of the ambiguity group $G$. Moreover, we will show that we can determine $(x', w')$ using at most
$$3(2W-1) + \left\lceil {{(4 \alpha-1)(N -(W +2\alpha))}\over{\alpha}} \right\rceil$$
STFT measurements.

To begin, by applying the $S^1 \times (\C^*)^\alpha$ factor in $G$, we may assume
that $w[0], \ldots , w[\alpha -1]$ are known (for example, we can assume that they are all equal to 1) and that $x[0]$ is positive real. Hence, our goal is to show
that if $(x',w')$ is a solution for $|Y_{m,r}(x', w')| = |Y_{m,r}(x,w)|$ with
$x'[0]$ positive real and $w'[0] \ldots w'[\alpha-1] =1$, then $(x',w')$
is obtained from $(x,w)$ by the action of the group of $R$-th roots of
unity.

\subsubsection{Recovery of $y_0, y_\alpha, y_{-\alpha}$, up to a phase, from $3(2W-1)$ measurements.}
Consider the three vectors 
\begin{enumerate} 
    \item $y_{-\alpha}:=(x[-\alpha]w[0],\ldots,x[-\alpha-(W-1)]w[W-1])$;
    \item $y_{0}:=(x[0]w[0],\ldots,x[-(W-1)]w[W-1]])$;
    \item $y_{\alpha}:=(x[\alpha]w[0],\ldots, x[0]w[\alpha], \ldots
      x[\alpha -(W-1)]w[W-1])$.
\end{enumerate}
The phaseless $3(2W-1)$ measurements of the form $|Y_{m,0}(x,w)|,
|Y_{m,r_1}(x,w)|, |Y_{m,r_{-1}}(x,w)|$, for $2W-1$ distinct values of $m$,  determine the Fourier intensity functions $A_{y_0}, A_{y_\alpha}, A_{y_{-\alpha}}$, respectively. Here $r_1, r_{-1} \in [0,R-1]$ are defined by
the condition that $\alpha \equiv r_1 L \bmod N$ and $-\alpha \equiv r_{-1} L \bmod N$.

The triple $(y_{0},y_{-\alpha},y_{\alpha})$ satisfies the quadratic relations 
\begin{equation}\label{eq_blind_triple}
y_{-\alpha}[\ell]y_{\alpha}[\ell + \alpha] = y_{0}[\ell]y_{0}[\ell+\alpha], \quad \ell=0,\ldots,W-1- \alpha.
\end{equation}
By construction, the map $\Phi:\mathbb{C}^{N}\times\mathbb{C}^{W}\rightarrow
\mathbb{C}^{W}\times\mathbb{C}^{W}\times\mathbb{C}^{W}$, $\Phi(x,w)=
(y_{0},y_{-\alpha}, y_{\alpha})$,  
has image contained in the algebraic subset of $(\C^W)^3$
by equations~\eqref{eq_blind_triple}. Let $Z$ be the closure of the image.
The following proposition is proved in  Appendix~\ref{sec:proof_prop_hardhay}.
\begin{proposition} \label{prop.hardhay}
  For generic $(z_0,z_\alpha, z_{-\alpha}) \in Z \subset (\C^W)^3$,
  if $(z'_0, z'_\alpha, z'_{-\alpha}) \in Z$ have the same Fourier intensity
  functions as $(z_0,z_\alpha, z_{-\alpha})$,
  then there are angles $\theta_0, \theta_\alpha$ such that
  $z'_0 = e^{\iota \theta_0}z_0$, $z'_\alpha = e^{\iota (\theta_0 + \theta_{\alpha})}z_\alpha$, 
  and $z'_{-\alpha} = e^{\iota(\theta_0- \theta_\alpha)} z_{-\alpha}$.
\end{proposition}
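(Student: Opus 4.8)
The plan is to combine single-vector Fourier phase retrieval with the algebraic constraints cutting out $Z$, following the strategy behind Proposition~\ref{prop.linearhay}. First I would treat the three vectors one at a time. For generic $(x,w)$ all relevant entries of $x$ and $w$ are nonzero, so the endpoints $z_0[0], z_0[W-1]$ (and likewise for $z_\alpha, z_{-\alpha}$) are nonzero and the factorization recalled in Section~\ref{sec.notation_fourier} applies. Since $A_{z_0}=A_{z_0'}$, $A_{z_\alpha}=A_{z_\alpha'}$ and $A_{z_{-\alpha}}=A_{z_{-\alpha}'}$, the consequence of \cite[Theorem 3.1]{MR3842644} shows that $z_0'$, $z_\alpha'$, $z_{-\alpha}'$ are each obtained from $z_0$, $z_\alpha$, $z_{-\alpha}$ by a global phase together with a choice of a subset of roots to reflect across the unit circle via $\beta \mapsto 1/\overline{\beta}$. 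Writing $I_0, I_\alpha, I_{-\alpha} \subset [1,W-1]$ for these reflection sets, the problem reduces to showing that for a generic point of $Z$ the only reflection sets compatible with the membership $(z_0',z_\alpha',z_{-\alpha}') \in Z$ are the empty ones.

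The heart of the argument, and the step I expect to be the main obstacle, is precisely this vanishing of the reflection sets, for which I would invoke the method of \cite[Theorem 5.3]{edidin2019geometry}. The variety $Z$ is irreducible, being the closure of the image of the irreducible space $\C^N \times \C^W$ under the polynomial map $\Phi$. For each of the finitely many nontrivial patterns $(I_0, I_\alpha, I_{-\alpha})$, the set of points of $Z$ on which that pattern succeeds---i.e.\ for which the reflected triple again lies in $Z$---is carved out by algebraic conditions on $(x,w)$; by irreducibility it is therefore either all of $Z$ or a proper, measure-zero subset. Thus it suffices to exhibit, for each nontrivial pattern, a single admissible configuration $(x,w)$ for which the reflected triple violates the defining relations~\eqref{eq_blind_triple} (or otherwise leaves $Z$); the union of the resulting proper subvarieties over the finitely many patterns then shows that a generic element of $Z$ admits no nontrivial reflection. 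Building these test configurations---chosen degenerate enough that the roots of each $\hat{z}$ can be tracked explicitly and one can check that reflecting any nonempty subset breaks~\eqref{eq_blind_triple}---is the technical content I would relegate to the appendix. The delicate point is that root reflection involves complex conjugation, so the argument must be organized around the finite family of reflection choices rather than a naive Zariski-closedness statement.

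Once the reflection sets are known to be empty, only global phases remain: $z_0' = e^{\iota \theta_0} z_0$, $z_\alpha' = e^{\iota \psi} z_\alpha$ and $z_{-\alpha}' = e^{\iota \chi} z_{-\alpha}$ for some angles $\theta_0, \psi, \chi$. Substituting these into~\eqref{eq_blind_triple} and using that the common value $z_{-\alpha}[\ell] z_\alpha[\ell+\alpha] = z_0[\ell]z_0[\ell+\alpha]$ is nonzero for some $\ell$ generically, I obtain $e^{\iota(\psi + \chi)} = e^{2\iota\theta_0}$, that is $\psi + \chi \equiv 2\theta_0 \pmod{2\pi}$. Setting $\theta_\alpha := \psi - \theta_0$ then gives $\chi = \theta_0 - \theta_\alpha$, which is exactly the asserted form $z_\alpha' = e^{\iota(\theta_0 + \theta_\alpha)} z_\alpha$ and $z_{-\alpha}' = e^{\iota(\theta_0 - \theta_\alpha)} z_{-\alpha}$, completing the proof.
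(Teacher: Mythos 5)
Your overall strategy coincides with the paper's: reduce, via the factorization of the Fourier intensity function and \cite[Theorem 3.1]{MR3842644}, to the finitely many root-reflection patterns $(I_0,I_\alpha,I_{-\alpha})$; observe that for each nontrivial pattern the locus of points of the irreducible variety $Z$ on which that pattern is compatible with membership in $Z$ is either everything or negligible; and then kill each pattern with a test configuration. Your closing phase bookkeeping is also correct: from $z_{-\alpha}[\ell]z_\alpha[\ell+\alpha]=z_0[\ell]z_0[\ell+\alpha]$ being nonzero for some $\ell$ one gets $e^{\iota(\psi+\chi)}=e^{2\iota\theta_0}$, and setting $\theta_\alpha:=\psi-\theta_0$ yields exactly the asserted form (this is why the paper's admissible locus requires some product $z_0[n]z_0[n+\alpha]\neq 0$).

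The genuine gap is that the test configurations, which you propose to ``relegate to the appendix,'' are the entire mathematical content of the proof, and constructing them is not routine. The paper needs three distinct constructions: (i) a sparse triple, explicitly verified to lie in the \emph{image} of $\Phi$ (this check matters, because $Z$ is the closure of the image, not the full zero set of~\eqref{eq_blind_triple}), in which $\hat z_0$ and $\hat z_\alpha$ have all roots on the unit circle while $\hat z_{-\alpha}(\omega)=\tfrac14+\tfrac14\omega^{W-1-\alpha}+\omega^{W-1}$ has all roots strictly inside it (by Cauchy's bound), so that reflecting any nonempty subset of roots strictly increases the constant term past $\tfrac14$ and breaks~\eqref{eq_blind_triple}; (ii) a triple with $z_0=z_{-\alpha}=(1,\dots,1)$ and $z_\alpha=(c,1,\dots,1)$, $c>W-1$, whose roots all lie outside the unit circle, to force $I_\alpha=\emptyset$; and (iii) a separate genericity argument comparing $|z_0[W-1]z_0[W-1-\alpha]|$ before and after reflection to force $I_0=\emptyset$, which in turn requires knowing that an arbitrary $z_0$ can be completed to a triple in the admissible locus. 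Each configuration must also be checked to lie in the open set where the endpoint entries are nonzero and the global-phase normalization applies. Without exhibiting such configurations and performing these membership checks, what you have is a correct plan rather than a proof; the step you defer is precisely the one the paper's appendix is devoted to.
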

%\end{document}
Applying the action of the subgroup $S^1 \times \C^*$
of the ambiguity group $G$ we may assume that $x[0]$ is real and positive and that 
$$w[0] =\ldots  = w[\alpha-1]=1.$$

%that $x[0]$ is positive real and $w[0] =1$ implies that we eliminate
%the phase indeterminacy of
%$$y_0(x,w) = (x[0]w[0], x[-1] w[1], \ldots, x[-W +1]w[W-1]).$$
It then follows from Proposition \ref{prop.hardhay} that 
if $(x',w')$ is a pair such that 
$|Y_{m,r}(x',w')| = |Y_{m,r}(x,w)|$ for $2W-1$ distinct values of $m$ for
 $r = 0, r_1, r_{-1},$
then we may assume
that 
\begin{eqnarray*}
y_0(x',w') & = &  y_0(x,w),\\
y_\alpha(x',w') & = &e^{\iota \theta_\alpha}y_\alpha(x,w),\\
y_{-\alpha}(x',w') &= & e^{-\iota\theta_\alpha}y_{-\alpha}(x,w).
\end{eqnarray*}
and 
%Since we have applied a trivial ambiguity to assume that $x'[0]$ and $x[0]$
%are positive real and that 
$$w'[0] = w[0],\ldots w'[\alpha-1] =w[\alpha -1].$$ 
%that the angle $\theta_0$ is 0; i.e., we may assume
%that 
It follows that $x'[-\ell]=x[-\ell]$ for $\ell=0,\ldots,\alpha-1$. The equality $y_{\alpha}(x',w')[\alpha+\ell] = e^{\iota \theta_\alpha}
y_{\alpha}(x,w)[\alpha+l]$ implies $$w'[\alpha+\ell]=e^{\iota\theta_{\alpha}}
w[\alpha+\ell].$$

Since $y_{0}(x',w')[\alpha+\ell]=y_{0}(x,w)[\alpha+\ell]$,
we conclude that $$x'[-\alpha-\ell]=e^{-\iota\theta_{\alpha}} x[-\alpha-\ell].$$
The equality $y_{\alpha}(x',w')[2\alpha+\ell]=e^{\iota \theta_\alpha}
y_{\alpha}(x,w)[2\alpha+\ell]$, then implies that
$$w'[2\alpha+\ell]=e^{2\iota\theta_{\alpha}}w[2\alpha+\ell].$$
Going back to $y_{0}(x',w')[2\alpha+\ell]$ and $y_{0}(x,w)[2\alpha+\ell]$, we deduce that 
$$x'[-2\alpha-\ell]=e^{-2\iota\theta_{\alpha}}x[-2\alpha-\ell].$$ This procedure goes on. In the end, we conclude that
\begin{equation*}
  w'[n]=e^{\iota \lfloor n/\alpha\rfloor \theta_{\alpha}}w[n],
\end{equation*}
for $n = 0, \ldots W-1$
and
\begin{equation*}
x'[m]= e^{\iota \lceil m/\alpha \rceil  \theta_{\alpha}} x[m],
\end{equation*}
for the $W+2\alpha$ values
$m = \alpha, \alpha-1, \ldots, 0, \ldots , -(W-1+\alpha)$.

\subsubsection{Determining the other values of $x[n]$}
We can now proceed recursively to compute $x[n]$ for $n \notin [-W+1-\alpha, \alpha]$.
Consider the vector $$y_{2\alpha}[x,w] = (x[2\alpha]w[0], x[2\alpha -1]w[1], \ldots x[\alpha]w[\alpha], \ldots , x[2\alpha -W +1]w[W-1]).$$
By our first step, we know the $W-\alpha$ entries of $y_{2\alpha}$ up to the
unknown common phase $e^{2\iota \theta_\alpha}$.
Precisely, $y_{2\alpha}(x',w')[n] = e^{2 \iota \theta_\alpha} y_{2\alpha}(x,w)[n]$
%\TODO{typo}
for $n \geq \alpha$. In particular, we know the last $W-\alpha$ entries
of the vector $z_{2\alpha} = y_{2\alpha}(x',w')/(x'[\alpha]w'[\alpha])$.
(Note that we assume that $x[\alpha]w[\alpha]$ is non-zero.)
Also, since $|x[\alpha]w[\alpha]|$ is known, the STFT measurements
$Y_{r_2,m}(x,w)/|x[\alpha]w[\alpha]|$ give values of the Fourier intensity function $A_{z_{2\alpha}}$ of $z_{2\alpha}$. By~\cite[Corollary IV.3]{bendory2019blind}, the vector~$z_{2\alpha}$ can be determined from $4 \alpha -1$ phaseless measurements.
It follows that for $0\leq \ell \leq \alpha -1$, $x'[2\alpha -\ell ]w'[\ell] =
e^{ 2\iota \theta_\alpha} x[2\alpha -\ell]w[\ell]$. Since we have assumed that
$w'[\ell] = w[\ell] =1$ for $0 \leq \ell \alpha -1$, we deduce
that $x'[\alpha + \ell] = e^{2\iota \theta_\alpha}x[\alpha +\ell]$ for $0 \leq \ell \alpha -1$.

We can now continue by recursion, using $4\alpha -1$
phaseless STFT measurements at each step, to determine that
$y_{j \alpha}(x',w') = e^{\iota j \theta_\alpha}y_{j\alpha}$
%\tb{seems like a typo}
for $j = 3,\ldots , \lceil (N-W-2\alpha)/\alpha \rceil$. This in turn
implies that $x'[n] = e^{\iota \lceil n/\alpha \rceil \theta_\alpha}x[n]$.
However, since our indexing is taken modulo $N$, $x[-n] = x[N-n]$
so that
$e^{\iota \lceil -n/\alpha \rceil \theta_\alpha} = e^{\iota \lceil N-n/\alpha \rceil
  \theta_{\alpha}}$.
Recalling that $N = R \alpha$ we see that this condition is equivalent
to the condition that
$R \theta_\alpha \equiv 0 \bmod 2 \pi$; i.e., $e^{\iota \theta_\alpha}$ is an $R$-th root of unity. Hence, $(x',w')$ is equivalent to $(x,w)$ under the action of
the ambiguity group $G$, as desired.

\section{Numerical experiments} \label{sec:numerical_experiments}
We conducted numerical experiments to examine the bound of Theorem~\ref{thm.known_window}. To recover the signal from samples of its phaseless STFT measurements, we used the relaxed-reflect-reflect (RRR) algorithm, whose $(t+1)$st iteration reads 
\begin{equation}
y^{t+1} = y^{t} + \beta (P_1(2P_2(y^{t})-y^{t})-P_2(y^{t})), 
\end{equation}
where $P_1$ and $P_2$ are projection operators, and $\beta$ is a parameter; we set $\beta=1/2.$
RRR is a general computational framework for constraint satisfaction problems, such as phase retrieval, graph coloring, sudoku, and protein folding~\cite{elser2018benchmark,elser2007searching}. 
In our setting, the algorithm aims to estimate the full $N^2$ STFT entries  (with phases)  from a random subset of its magnitudes. The full STFT
uniquely determines the corresponding signal. 
In particular, in our setting, the first projection, $P_1$, is the orthogonal projector onto
the subspace of matrices which are the STFT of some signal. 
Namely, 
\begin{equation}
P_1 = AA^\dagger,
\end{equation}
where $A$ is the STFT operator as a matrix, and $A^\dagger$ is its pseudo-inverse. 
The second projection,~$P_2$, uses the measured data and is acting by
\begin{equation}
	(P_2z)[i] = \begin{cases}
		\text{sign}(z[i])|y[i]|,& \quad i\in M, \\
		z[i],& \quad i\notin M,
	\end{cases}
\end{equation} 
where $M$ denotes the set of STFT entries for which the magnitudes are known (the measurements), $|y[i]|$ is the $i$th STFT magnitude, and $\text{sign}(z[i]):=\frac{z[i]}{|z[i]|}$.

We use RRR since it is guaranteed to halt only when both constraints are satisfied~\cite[Corollary 4]{levin2019note}. Therefore, we expect (although not guaranteed) to find  a point whose phaseless STFT matches the measurements after enough RRR iterations. The number of iterations required to find such a point provides a measure of hardness~\cite{elser2018benchmark}. In our experiments, we stopped the algorithm when the ratio $||y^{t+1}-y^{t}||/||y^{t}||$  dropped below $10^{-8}$, or after a maximum of $10^4$ iterations. 
We did not conduct experiments for the blind case (Theorem~\ref{thm.unknown_window}) since, as far as know, there is no algorithm that is guaranteed to find a feasible point. 

In our experiments, we set $N=11$ and collected $KN$ STFT magnitudes; the entries were chosen uniformly at random for $K=2,4,6,8.$
The entries of the real underlying signal were drawn from a Gaussian distribution with mean zero and variance 1. 
Note that since the signal is real, the number of parameters to be recovered is $N$, and not $2N$ as in Theorem~\ref{thm.known_window}.
The entries of the window were drawn from the same distribution. 
For each $K$, we conducted 100 trials for each pair of $(L,W)$, where $L=1,\ldots,6$ and $W=1,\ldots,11.$ We  declared a successful trial if the relative error between the estimated signal and the underlying signal (up to a sign)  dropped below $10^{-4}$. 

Figure~\ref{fig:numerical_experiments} reports the success rate and the average number of RRR iterations per $K,W,L$. As expected, the success rate increases with $K$. For $K=2$ ($2N$ STFT magnitudes), 
we can see that for $L\leq 5$ and large enough $W$, the RRR usually does not require many iterations, but it does not always find a solution. Nevertheless,  the success rate is not negligible. 
For $K=6$ and $K=8$, the success rate tends to $1$ for $L\leq 5.$ As can be seen, the true solution is found after a small number of iterations, indicating that the problem is rather easy in this regime. 
Overall, these experiments indicate that indeed a signal can be recovered from a subset of its phaseless STFT magnitudes, and in some cases, quite easily.

\begin{figure}
	\begin{subfigure}[ht]{0.45\columnwidth}
		\centering
		\includegraphics[width=.8\columnwidth]{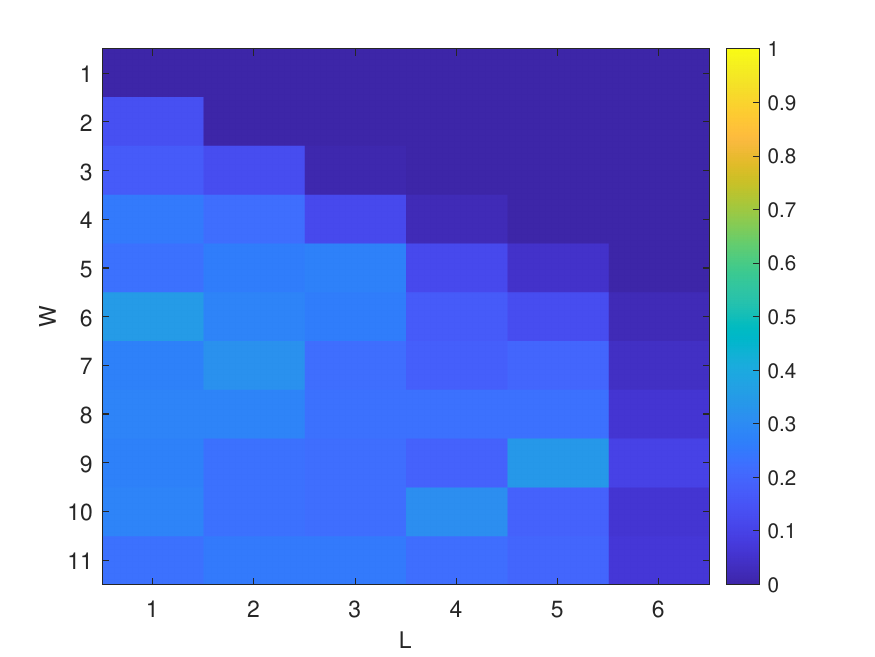}
		\caption{Success rate for $K=2$}
	\end{subfigure}
	\hfill
	\begin{subfigure}[ht]{0.45\columnwidth}
		\centering
		\includegraphics[width=.8\columnwidth]{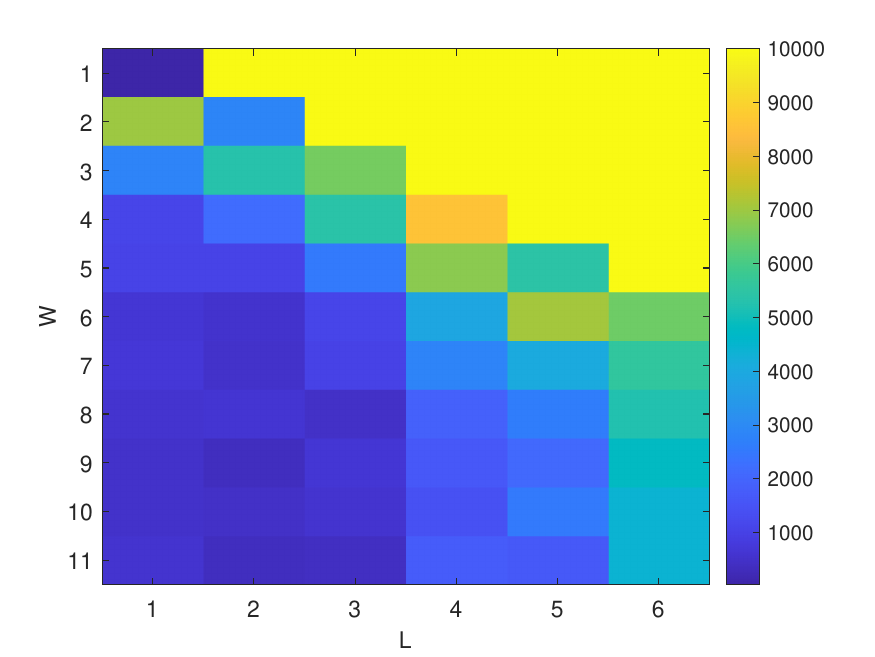}
		\caption{Number of iterations for $K=2$}
	\end{subfigure}

	\begin{subfigure}[ht]{0.45\columnwidth}
	\centering
	\includegraphics[width=.8\columnwidth]{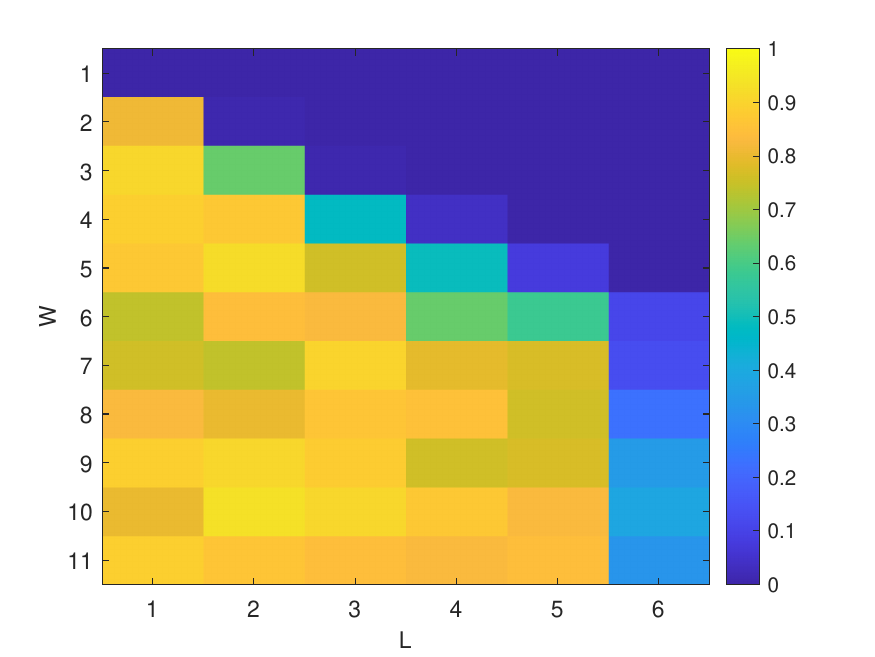}
	\caption{Success rate for $K=4$}
\end{subfigure}
\hfill
\begin{subfigure}[ht]{0.45\columnwidth}
	\centering
	\includegraphics[width=.8\columnwidth]{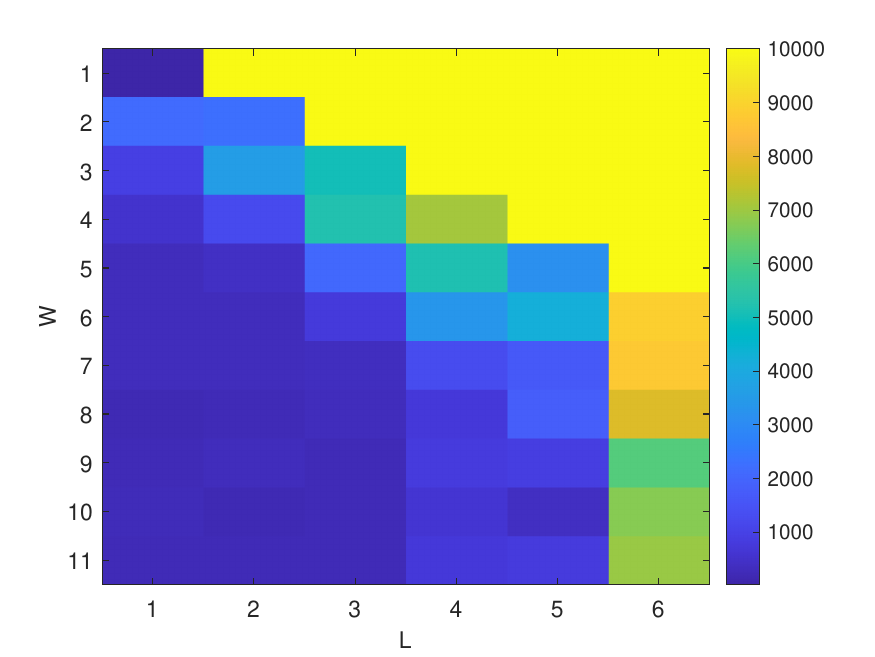}
	\caption{Number of iterations for $K=4$}
\end{subfigure}

	\begin{subfigure}[ht]{0.45\columnwidth}
	\centering
	\includegraphics[width=.8\columnwidth]{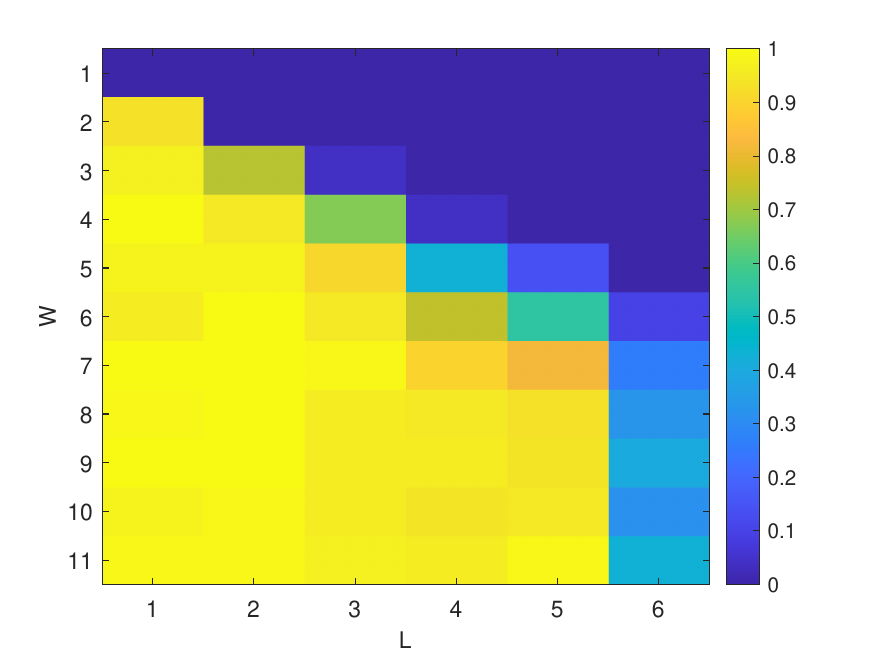}
	\caption{Success rate for $K=6$}
\end{subfigure}
\hfill
\begin{subfigure}[ht]{0.45\columnwidth}
	\centering
	\includegraphics[width=.8\columnwidth]{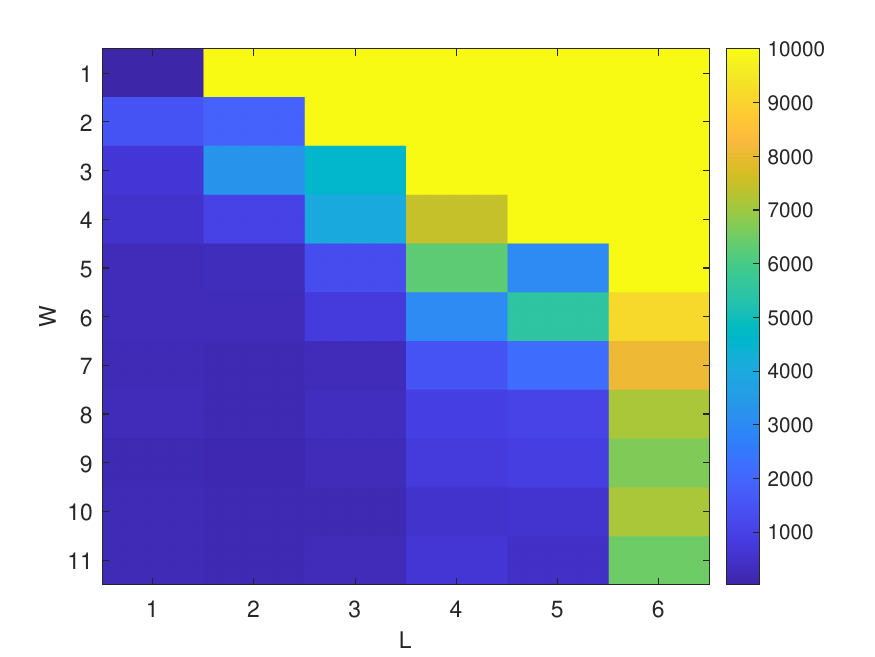}
	\caption{Number of iterations for $K=6$}
\end{subfigure}	\begin{subfigure}[ht]{0.45\columnwidth}
\centering
\includegraphics[width=.8\columnwidth]{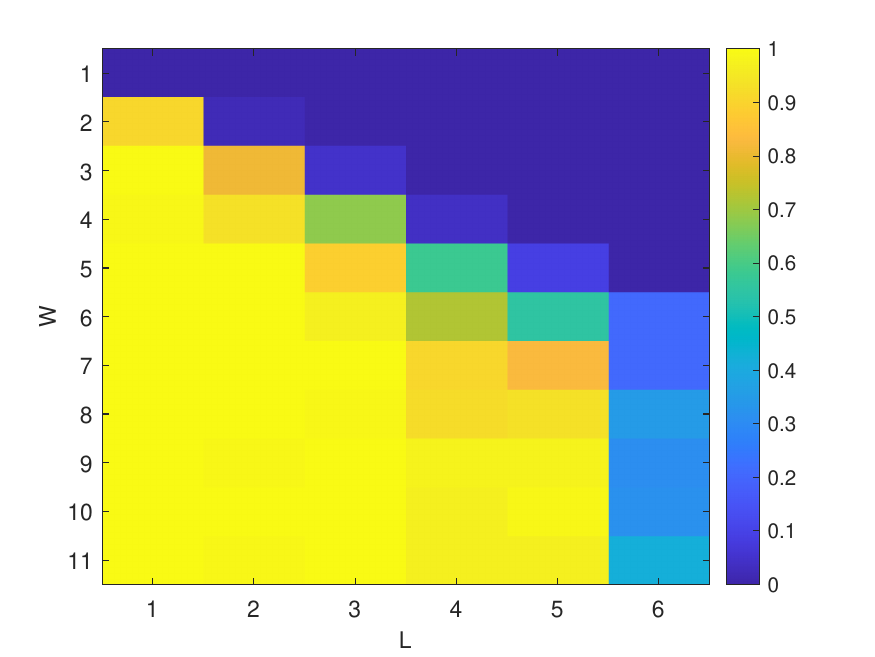}
\caption{Success rate for $K=8$}
\end{subfigure}
\hfill
\begin{subfigure}[ht]{0.45\columnwidth}
\centering
\includegraphics[width=.8\columnwidth]{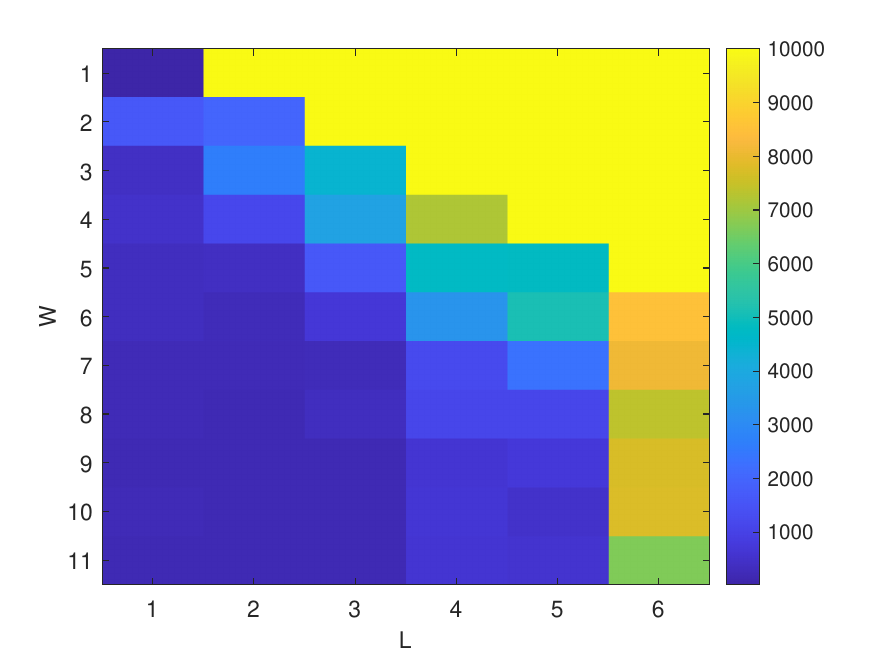}
\caption{Number of iterations for $K=8$}
\end{subfigure}

	\caption{\label{fig:numerical_experiments} The success rate (left column) and average number of iterations (right column) for recovering a signal from its $NK$ phaseless STFT measurements for $K=2,4,6,8$.}
\end{figure}

\section{Orbit frame phase retrieval}
The periodic STFT phase retrieval problem leads to a natural mathematical generalization
which we refer to as {\em phase retrieval for orbit frames}. 
Let $H$ be a compact group acting on~$\C^N$. The {\em orbit} of a possibly unknown
\emph{generating kernel}
$u \in \C^N$ is the set $\{hu| h \in H\}$. An {\em orbit frame} is a matrix  $A\in 
\C^{M \times N}$ $(M\geq N)$  of rank $N$ whose rows are samples of the vectors in $hu$.
The phase retrieval problem for an orbit frame is determining whether
a vector~$x$ can be recovered, up to symmetries, from the phaseless measurements
$|Ax| \in \R_{\geq 0}^M$. 

The definition of orbit frames is broad, and our main focus for future work is the case
where the group $H$ is of the form $G \times \T$, where
$\T$ is  subgroup of
$S^1$ acting on $\C^N$  with weights $(0,1 \ldots , N-1)$, and $G$ is a finite group. In this model, our phaseless
frame measurements on a vector $x$ are samples of the Fourier intensity functions
$|{\widehat {D_1x}}(\omega)|^2, \ldots , |{\widehat{D_rx}}(\omega)|$, where $D_1, \ldots D_r$ 
are diagonal matrices
obtained from the action of the group $G$ on the kernel vector $u$, and $\widehat{D_1x}, \ldots \widehat{D_rx}$ are the Fourier transforms of ${D_1x}, \ldots {D_rx}$. In particular, the periodic STFT model can be thought of as a special case, where $\T = \Z_N$ is the
group of $N$-th roots of unity, $H = \Z_N$ is the group of cyclic translations
and the kernel $u=w$ has support length $W$. (When the kernel $u$ is arbitrary,
this is a Gabor frame; perfect phase retrieval for full Gabor frames was studied
in \cite{bojarovska2016phase}.) The
diagonal matrices $D_1, \ldots D_r$ are  $\diag(w), \diag(T_{L}w),
\ldots \diag(T_{L(R-1)}w)$,
where $T_L$ is the translation operator shifting the entries of $w$ by $L$ entries. 
The phaseless periodic STFT measurements are obtained by sampling the functions $|{\widehat{D_jx}}(\omega)|^2$
at the $N$-th roots of unity. 

The orbit frame phase retrieval problem has been previously studied by a number of authors~\cite{pfander2019robust, bojarovska2016phase,li2019phase,cheng2019twisted} with the main focus
being on constructing large frames, typically of size $M=O(N^2)$, which admit perfect reconstruction from phaseless measurements. 
As in this paper,  we wish to  construct smaller
frames, of size $O(N)$,  for which generic vectors can be recovered from phaseless measurements.
%Specifically, a natural problem for further work is to determine information-theoretic bounds both on
%the number of structured and random phaseless measurements required for
%generic phase retrieval in orbit frames.
Although this problem is mathematically motivated,
understanding the information-theoretic limits of the
general model has the potential to inspire physicists and engineers to develop new measurement techniques.

\section*{Acknowledgment}
This research is support by the BSF grant no. 2020159.
T.B. is also supported in part by the NSF-BSF grant no. 2019752, and the ISF grant no. 1924/21 and D.E. was also supported by NSF-DMS 1906725.

\bibliographystyle{plain}

%\bibliography{ref}

%\bibliographystyle{alpha}

\appendix
    \section{Proof of Proposition~\ref{prop.linearhay}} \label{sec:proof_prop_linhay}
      Let $Z \subset \C^W \times \C^W$ be the linear subspace defined by
      equations~\eqref{eq.linear_relations}. The subspace $Z$ is invariant
      under the action $S^1$, which acts by simultaneous rotation of each
      vector. 
      Let $H$ be the quotient by the $S^1$ action 
      of the open set in $Z$ corresponding to pairs $(z_0, z_\alpha)$
      with $z_0[0], z_0[W-1], z_\alpha[0], z_\alpha[W-1]$
      %\TODO{$z_1$?}
      all non-zero. This implies that the roots $\hat{z_0}(\omega)$
      and $\hat{z_\alpha}(\omega)$ are all non-zero.
      
      Consider the incidence subvariety $I_H \subset H \times H$ consisting of
      pairs of equivalence classes $\left((z_0,z_\alpha), (z'_0, z'_\alpha)\right)$, where $z_\ell$ and $z'_\ell$
      have the same Fourier intensity function. 
      Consider the projection to the first factor $I_H \to H$. Observe 
      that for a given  pair $(z_0,z_\alpha)$ with both vectors non-zero
      there are at most $(2^{W-1})^2$ pairs
      of the form $((z_0,z_\alpha), (z'_0,z'_\alpha))$ in $H$. The reason is as follows.
      We know that for a given vector $z_0$ there are (at most) $2^{W-1}$ vectors
      $z'_{0,j}$ such than any vector $z'_0$ with $A_{z'_0} = A_z$
      must be of the form $z'_0 = e^{\iota \theta_0} z'_{0,j}$ for some $j$.
      Likewise, there are (at most) $2^{W-1}$ vectors $z'_{\alpha,k}$
      such that any vector $z'_\alpha$ with $A_{z'_{\alpha}} = A_{z_\alpha}$
      must be of the form $z'_\alpha = e^{\iota \theta_\alpha} y_{\alpha, k}$ for some
      $k$.       
      However, if we require that the pair $(z'_0, z'_\alpha)$ lies
      in $Z$, then for a given choice of angle $\theta_0$ and vector
      $e^{\iota \theta_0} z'_{0,j}$, there can be at most one angle $\theta_{\alpha}$
      such that the pair  $(e^{\iota_{\theta_0}} z'_{0,j}, e^{\iota \theta_\alpha}
      z'_{1,k})$ satisfies the linear equations~\eqref{eq.linear_relations}.
      
      Note that $I_H$ contains the diagonal $\Delta_H = \{(z_0,z_\alpha),
      (z_0,z_\alpha)| (z_0, z_\alpha) \in H\}$. The above discussion
      shows that $I_H$ has at most $(2^{W-1})^2$ possible components that
      can surject onto $H$. We index the possible components
      as $(I_H)_{j,k}$ with $0 \leq j, k\leq 2^{W-1} -1$
      with $(I_H)_{0,0}$ corresponding to the diagonal.
      
       We will
        show that none of the components can have image all of $H$ by
        explicitly constructing pairs $(z_0,z_\alpha)$ such that
        for every component  $(I_H)_{j,k}$ with $(j,k) \neq (0,0)$
        one of our pairs is not in $(I_H)_{j,k}$, 
        the image of that component. 
        For our first pair, we take $z_0 = (1, 1, \ldots , 1)$ (vector
        of all ones) and $$z_\alpha = (w[\alpha]/w[0], w[\alpha +1]/w[1],
        \ldots w[W-1]/w[W-1- \alpha],a_{W-\alpha}, \ldots, a_{W-1}),$$
        where the $a_k$'s are chosen generically.
        For generic choice of vector $w$
        and $a_{W-\alpha}, \ldots , a_{W-1}$, there will be exactly
        $2^{W-1}$ distinct vectors, up to a global phase, with the same
        Fourier intensity function as $z_\alpha$. On the other hand, $z_0$
        has been chosen so that the roots of its Fourier transform all
        lie on the unit circle, so any vector with same Fourier
        intensity function as~$z_0$ is obtained from~$z_0$ by
        multiplying by a global phase. The choice of $z_\alpha$ implies
        that the only pair in the fiber of the map $I_H \to H$ lying
        over $(z_0,z_\alpha)$ is $(z_0,z_\alpha)$. (Recall that we have
        quotiented out  by a global phase ambiguity in our definition
        of $H$.) This implies that any component $(I_H)_{j,k}$ whose
        image contains $(z_0,z_\alpha)$ must necessarily be of the form
        $(I_H)_{j,0}$ for some $j$, possibly non-zero. Here we use the natural notation that a component $(I_H)_{j,0}$ consists of pairs of the form
        $\left((z_0,z_\alpha), (z'_0,z_\alpha)\right)$.
        %\TODO{The last parenthesis should be part of the text.}

        For our second vector we take $z_\alpha= (1,\ldots, 1)$ (all ones)
        and $$z_0= (w[0]/w[\alpha], w[1]/w[\alpha+1], \ldots ,
        w[W-1-\alpha]/w[W-1], b_{W-\alpha}, \ldots, b_{W-1})$$ where  the $b_k$'s are chosen generically.
        The same reasoning as before implies that the only possible components
        of $I_H$ containing the pair $(z_0, z_\alpha)$ must necessarily be of
        the form $(I_H)_{0,k}$ for some $k$, possibly non-zero.

        Putting this together, we see that the only
        component of $I_H$ that contains both of these test vectors
        is $(I_H)_{0,0}$. Therefore, no other component has image
        all of $H$. Hence, for a generic vector $(z_0,z_\alpha) \in H$, 
        the only pair $(z'_0,z'_\alpha) \in H$ with the same Fourier intensity functions as $(z_0,z_\alpha)$ is $(z_0,z_\alpha)$. This concludes the proof of Proposition
        \ref{prop.linearhay}.
        
        \section{Proof of Proposition~\ref{prop.hardhay}} \label{sec:proof_prop_hardhay}
        The proof of Proposition \ref{prop.hardhay} is similar to the proof
        of Proposition~\ref{prop.linearhay} but more intricate.  Again, let $Z \subset (\C^W)^3$ be the closure of the image of $\C^N \times \C^W$
        under the map
        $$(x,w) \mapsto (z_0(x,w), z_\alpha(x,w), z_{-\alpha}(x,w)).$$ Any triple in $Z$ satisfies the equations~\eqref{eq_blind_triple}.
The group $S^1 \times S^1$ acts on $Z$ with the following action:
        $$\left(e^{\iota \theta_0}, e^{\iota \theta_\alpha}\right) \cdot (z_0, z_\alpha, z_{-\alpha})
= (e^{\iota \theta_0} z_0, e^{\iota \theta_\alpha} z_\alpha, e^{\iota -\theta_\alpha}z_{-\alpha}).$$
Let $H$ be the quotient by $S^1 \times S^1$ of the open set in $Z$ of triples for which $z_0[0], z_0[W-1], z_\alpha[0], z_\alpha[W-1], z_{-\alpha}[0], z_{-\alpha}[W-1]$ are all non-zero and at least one product $z_0[n]z[n+\alpha]$ is non-zero.

        Let $I_H \subset H \times H$ denote the real algebraic subset of
        pairs $\left((z_0,z_\alpha, z_{-\alpha}), (z'_0, z'_\alpha,z'_{-\alpha})\right)$
        such that $A_{z_j} =A_{z'_j}$ for $j = 0, \alpha, -\alpha$. The same argument
        used in the proof of Proposition~\ref{prop.linearhay} shows that
        the polynomial constraint given by \eqref{eq_blind_triple} implies
        that for any triple $(z_0, z_\alpha, z_{-\alpha}) \in H$ there are at most
        $(2^{W-1})^3$ possible pairs of triples
        $\left( (z_0,z_\alpha, z_{-\alpha}), (z'_0, z'_\alpha, z'_{-\alpha})\right)
        \in I_H$. Thus, $I_H$ has at most $(2^{W-1})^3$ components which can dominate
        $H$. We index them by  $(I_H)_{j,k, \ell}$ with $j,k,\ell \in
            [0, 2^{W-1}]$ and the component $(I_H)_{0,0,0}$ is the diagonal.
            
          Again, we will show that the only component of $I_H$ that
          can surject onto $H$ is $(I_H)_{0,0,0}$. 
 
Consider the triple $(z_{0},z_{\alpha},z_{-\alpha})$, where 
\begin{align*}
z_{-\alpha}[i]=\begin{cases}
	1/4& \text{if}\quad  i=0,W-1-\alpha,\\
	1& \text{if} \quad i=W-1,\\
	0& \text{else},
\end{cases}
\end{align*}

\begin{align*}
z_{\alpha}[i]=\begin{cases}
	4& \text{if} \quad i=0,\alpha,W-1-\alpha,W-1,\\
	0& \text{else}.
\end{cases}\end{align*}
\begin{align*}
z_{0}[i]=\begin{cases}
	1& \text{if} \quad i=0,\alpha,W-1-\alpha,W-1,\\
	0& \text{else}.
\end{cases}\end{align*}

This particular triple can be seen to be in the image of the map $\Phi$ by setting $w[0]=w[\alpha]=w[W-1-\alpha]=w[W-1]=1$ and $w[j]=0$ otherwise, and choosing
the values of $x[j]$ accordingly.

The roots of the polynomials $\hat{z}_0(\omega), \hat{z}_\alpha(\omega)$
both lie on the unit circle, while
the roots of $\hat{z}_{-\alpha}(\omega) = 1/4 + 1/4\omega^{W-1-\alpha} + \omega^{W-1}$ all lie strictly inside the unit circle. 
This can be deduced by invoking Cauchy's theorem: the roots of
$1/4 + 1/4z^{W-1-\alpha} + z^{W-1}$ lie strictly inside the unit circle since  the unique positive root
of the polynomial $g(z) = z^{W-1} - 1/4z^{W-1-\alpha} -1/4$ is between $0$ and $1$
since $g(0) < 0$ and $g(1) > 0$. 
%The reason for this is that the unique positive root
%of the polynomial $g(z) = z^{W-1} - 1/4z^{W-1-\alpha} -1/4$ is between $0$ and $1$
%since $g(0) < 0$ and $g(1) > 0$. Thus by Cauchz's theorem the roots of
%$1/4 + 1/4z^{W-1-\alpha} + z^{W-1}$ lie strictly inside the unit circle.

Now, if $(z'_0, z'_{\alpha}, z'_{-\alpha})$ is a triple such that
$A_{z'_{\ell}} = A_{z_{\ell}}$ for $\ell \in \{0,\alpha, -\alpha\}$, then $z'_0, z'_\alpha$ are obtained
from $z_0, z_\alpha$ by multiplication by a global phase, because all of the roots
of $\hat{z}_0(\omega) , \hat{z}_\alpha(\omega) $ lie on the unit circle. On the other hand, since all of the of the roots of $\hat{z}_{-\alpha}(\omega)$ are distinct and none 
lie on the unit circle, there are, up to a global phase, $2^{W-1}$ vectors
$z'_{-\alpha}$. We will show that the triple $(z'_0, z'_\alpha, z'_{-\alpha})$ is in $H$ if and only if
$z'_{-\alpha}$ is obtained from $z_{-\alpha}$ by multiplication by a global
phase. To see this, note
that if $(\beta_1, \ldots , \beta_{W-1})$ are the roots of
the polynomial $1/4 + 1/4\omega^{W-1-\alpha} + \omega^{W-1}$, then
$$\hat{z}'_{-\alpha}(\omega) = \prod_{n \in I}|\beta_n| (\omega-\beta_i/|\beta_i|^2) \prod_{n \notin I} (\omega- \beta_n),$$ for some subset $I \subset [1,W-1]$. Since $|\beta_n| < 1$ because $(\beta_1, \ldots , \beta_{W-1})$ lie inside the unit circle
%\TODO{why?},
the constant term of $\hat{z}'_{-\alpha}(\omega)$
will be strictly greater than $1/4$, 
making it impossible for triple $(z'_0,z'_{\alpha}, z'_{-\alpha})$ to satisfy the constraints of~\eqref{eq_blind_triple}. This implies that
any component $(I_H)_{j,k, \ell}$ that contains the triple
$(z_0, z_\alpha, z_{-\alpha})$  in its image must be of the form $(I_H)_{j,k,0}$ for
some $j,k$. Hence, any component of $I_H$ which dominates $H$ must
be of the form $(I_H)_{j,k,0}$.

%This \TODO{which?} choice of vector shows that any component of $I_H$ that surjects onto
%$H$ cannot be of the form $I_{j,k,\ell}$ with $\ell \neq 0$. In other words,
%any possible component of $I_H$ which dominates
%$H$ must be of the form $(I_H)_{j,k,0}$. \TODO{This seems to be somewhat out of context?}

Now consider the triple $(z_0, z_{\alpha}, z_{-\alpha})$ with
$z_{-\alpha} = z_{0} = (1,\ldots , 1)$ (all ones), and
$z_\alpha = (c,1, \ldots , 1)$ with $c > W-1$.
The polynomial $z_\alpha(\omega)  = c+ \omega + \omega^2 \ldots + \omega^{W-1}$ has all roots outside
the unit circle, since $|\omega+ \ldots +  \omega^{W-1}| < W-1 < c$ for any $\omega$ inside the unit circle. If $A_{z'_\alpha}=A_{z_\alpha}$ and $(\beta_1, \ldots, \beta_{W-1})$
are the roots of $\hat{z_\alpha}(\omega)$, then
$$\hat{z'}_\alpha(\omega) = \prod_{\ell \in I}|\beta_{\ell}|(\omega- \beta_\ell/|\beta_\ell|^2) \prod_{\ell \notin I} (\omega- \beta_\ell),$$
for some subset $I \subset [1, W-1]$.
In particular, it follows that $|z'_\alpha[W-1]| > 1$ since $|\beta_{\ell}| > 1$
for all $\ell$ unless $|I| = \emptyset$.
On the other hand, all roots of $\hat{z}_0(\omega)$ and $\hat{z}_{-\alpha}(\omega)$ lie on the unit
circle, so if $A_{z'_0} = A_{z_0}$ and $A_{z'_{-\alpha}} = A_{z_{-\alpha}}$ then
$z'_0, z'_{\alpha}$ are obtained from $z_0, z_{\alpha}$ by a global phase change
and the magnitude of the entries are unchanged. Hence, the triple
$(z'_0, z'_\alpha, z'_{-\alpha})$ cannot satisfy equations \eqref{eq_blind_triple} unless $(z'_0, z'_\alpha, z'_{-\alpha})$ is obtained from $(z_0, z_\alpha, z_{-\alpha})$ by a global phase. Thus, the only possible components of
$I_H$ which dominate $H$ are of the form $(I_H)_{j,0,0}$.

To show that
a component of the form $(I_H)_{j,0,0}$ does not have image all of $H$ unless $j=0$, 
it suffices to show that there exists a triple $(z_0, z_{-\alpha}, z_{\alpha})$
in $H$ such that if $(z'_0, z_{-\alpha}, z_{\alpha})\in H$ and
$A_{z'_0} = A_{z_0}$, then $z'_0$ is obtained from $z_0$ by a global phase.
Note that any vector $z_0$ can be part of a triple in $H$, since for
any given $z_0$, the system of equations 
\begin{equation*}
z_{-\alpha}[\ell]
z_{\alpha}[\alpha + \ell] = z_0[\ell]z_0[\ell + \alpha], \quad \ell = 0, \ldots , W-1-\alpha,
\end{equation*}
%\{z_{-\alpha}[\ell]
%z_{\alpha}[\alpha + \ell] = z_0[\ell]z_0[\ell + \alpha]\}_{\ell = 0, \ldots , W-1-\alpha}$$
has positive dimensional solution space. We claim that we can choose a vector
$z_0$ such that if $z'_0$ does not differ from $z_0$ by a global phase, then
$|z_0[W-1] z_0[W-1-\alpha]| \neq |z_0'[W-1]z_0'[W-1-\alpha]|$. This follows
from a similar argument used in the proof \cite[Theorem 3.1]{MR3842644}.
If
$\beta_1, \ldots \beta_{W-1}$ are the roots of the
$\hat{z}_0(\omega)$,  then $|z_0[W-1] z_0[W-1-\alpha]|= |z'_0[W-1] z'_0[W-1-\alpha]|$
for some $z'_0$, only if $|S_\alpha(\beta_1, \ldots , \beta_{W-1})|
=\prod_{\ell \in I} |\beta_i| S_\alpha(\beta'_1, \ldots , \beta'_{W-1})$ where
$\beta'_i \in \{\beta_i, \beta_i/|\beta_i|^2\}$ and
$I \subset [1,W-1]$ is the subset where $\beta'_i = \beta_i/|\beta_i|^2$.
For general choice
of $(\beta_1,\ldots, \beta_{W-1})$,  these equations are not satisfied
unless $\beta_i' = \beta_i$ for all $i$. Hence,
$(I_H)_{j,0,0}$ does not surject onto $H$ unless $j = 0$. This concludes the proof of Proposition \ref{prop.hardhay}.
\end{document}